\newtheorem{proposition}{Proposition}
\newtheorem{theorem}{Theorem}
\begin{document}

\title[Darboux]{
The higher order Rogue Wave solutions  of the Gerdjikov-Ivanov equation}
\author{Lijuan Guo, Yongshuai Zhang, Shuwei Xu, Zhiwei wu,  Jingsong He}

\thanks{$^*$ Corresponding author: hejingsong@nbu.edu.cn,jshe@ustc.edu.cn}

\maketitle
\dedicatory { \ Department of Mathematics, Ningbo University,
Ningbo, Zhejiang 315211, P.\ R.\ China\\}

\begin{abstract}
 We construct higher order rogue wave solutions for the Gerdjikov-Ivanov equation explicitly in term of determinant expression.
 Dynamics of both soliton and non-soliton solutions is discussed. A
 family of solutions with distinct structures are presented, which
 are new to the Gerdjikov-Ivanov equation.
\end{abstract}
\noindent{\bf Key words}: Darboux transformation, higher order rogue wave, Gerdjikov-Ivanov equation.\\
\noindent{\bf PACS numbers}: 42.65.Tg,42.65.Sf,05.45.Yv,02.30.Ik

\section{{\bf Introduction}}

Rogue wave is one type of natural disasters first found in deep
ocean wave. Later, people observed similar phenomena in other
physical breaches, such as optical physics, plasmas, capillary waves
and so on \cite{NOW2009,Nature450,EPJST18557,PRL104104503,PRE86}.
There is a consensus that these rogue waves are the result of
modulation instability waves. Meanwhile, breather solution usually
comes from the instability of small amplitude perturbations that may
grow in size to disastrous proportions. Therefore, in mathematical
understanding, rogue wave can be treated as a limit case of Ma
soliton when the space period tends to infinity , or of the
Akhmediev breather as the time period approaches to
infinity\cite{PLA373675}. Due to both theoretical frame and reality
application, it is imperative to do further study about rogue waves,
especially, higher order rogue waves of different models.

The Peregrine soliton which is located in time-space plane is one of
the formal ways to explain the surprise phenomenon mathematically
\cite{JAMSS2516}. The solution appears from a non-zero constant and
disappears to the constant background as time approaching infinity,
but it develops a localized hump with peak amplitude three times of
average waves in the intermediate times. The object of higher order
rogue wave is now under intense discussion by different approaches
such as iterated Darboux transformation method
\cite{PRE82026602,PRE80026601}, the algebra-geometric means
\cite{EPJST185247}, and generalized Darboux transformation
\cite{PRE85026607}. Recently, the expression of the first order
rogue wave solution and the figure of the second order rouge wave
for the Gerdjikov-Ivanov (GI) equation was provided by the third and
fifth authors of present paper \cite{JMP063507}. However, the higher
order rogue wave solutions for this equation have not been studied.
The main aim of this paper is to discuss higher order rogue wave
solutions and describe their different structures.

The nonlinear Schr\"odinger equation is one of the most important
equations in physics, which can be derived from
Ablowitz-Kaup-Newell-Segur system \cite{PRL31125, SPJETP3462}.
Considering the higher order nonlinear effects, the derivative
nonlinear schr\"odignger equation with a polynomial spectral problem
of arbitrary order\cite{JPAMG143125} is regard as a model in a wide
variety of fields such as weakly  nonlinear dispersive water
waves\cite{PRS357}, nonlinear optics fibers\cite{PRA23,PRA27,JP39},
quantum field theory\cite{JPA23}, plasmas\cite{PF14}. The DNLS
equations have three generic deformations, the DNLSI equation
\cite{JMP19798}:
\begin{equation}\label{dnlsi}
{\rm i}q_{t}-q_{xx}+{\rm i}(q^2q^\ast)_{x}=0,
\end{equation}
the DNLSII equation \cite{PS20490}:
\begin{equation}\label{dnlsii}
{\rm i}q_{t}+q_{xx}+{\rm i}qq^{\ast}q_{x}=0,
\end{equation}
and the DNLSIII equation or the GI equation \cite{JPB10130}:
\begin{equation}\label{dnlsiii}
{\rm i}q_t+q_{xx}-{\rm i}q^2q^{\ast}_{x}+\frac{1}{2}q^3{q^\ast}^2=0.
\end{equation}

In many circumstances, Darboux transformation has been proved to be
one powerful methods to obtain soliton solutions \cite{MS91, Gu05},
breather solutions and rational solutions. The Darboux
transformation and its determinant expression for the GI equations
have been given in \cite{JPAMG33625, JMP063507}. But there is not a
straight extension construct higher order rogue wave solutions at
the same eigenvalues. In this paper, we take a limit technique with
respect to degenerate eigenvalues and Taylor expansion in Darboux
transformation \cite{PRE85026607,PLA166205,arxiv12093742}. Based on
this explicit method, we can further discuss the structure of
solutions, both soliton and rogue wave.

This paper is organized as following: In section 2, we review the
general $n$-fold Darboux transformation for the GI equation. In
section 3, explicit solutions are constructed, such as soltion,
breather, position solutions and higher order rogue wave with two
parameters $D_1$ and $D_2$.  By choosing different values of $D_1$
and $D_2$, we show four basic models, fundamental pattern,
triangular structure, modified triangular structure and ring
structure, and display their dynamical evolutions respectively in
section 4.  The conclusions and discussions are contained in the
final section.

\section{{\bf Darboux transformation for the Gerdjikov-Ivanov equation}}
In this section, we start with Lax pair of \eqref{dnlsiii} to construct Darboux transformation.
Considering the spectral problem
\begin{equation}\label{sys11}
\left\{
\begin{aligned}
\partial_{x}\psi&=(J\lambda^2+Q_{1}\lambda+Q_{0})\psi=U\psi,\\
\partial_{t}\psi&=(2J\lambda^4+V_{3}\lambda^3+V_{2}\lambda^2+V_{1}\lambda+V_{0})\psi=V\psi.
\end{aligned}
\right.
\end{equation}
where
\begin{equation}\label{fj1}
    \psi=\left( \begin{array}{c}
      \phi \\
      \varphi\\
     \end{array} \right)=\left( \begin{array}{c}
      \phi(x,t, \lambda) \\
      \varphi(x,t, \lambda) \\
     \end{array} \right),\nonumber\\
   \quad J= \left( \begin{array}{cc}
      -i &0 \\
      0 &i\\
   \end{array} \right),\nonumber\\
  \quad Q_{1}=\left( \begin{array}{cc}
     0 &q \\
     r &0\\
  \end{array} \right),\nonumber\\
  \quad Q_{0}=\left( \begin{array}{cc}
     -\dfrac{1}{2}i q r &0 \\
     0 &\dfrac{1}{2}i q r\\
  \end{array} \right),\nonumber\\
\end{equation}
\begin{equation}\label{fj2}
V_{3}=2Q_{1},  V_{2}=Jqr,  V_{1}=\left( \begin{array}{cc}
0 &iq_{x} \\
 -ir_{x}&0\\
\end{array} \right), V_{0}=\left( \begin{array}{cc}
\dfrac{1}{2}(rq_x-qr_x)+\dfrac{1}{4}iq^2 r^2 &0 \\
 0&-\dfrac{1}{2}(rq_x-qr_x)-\dfrac{1}{4}iq^2 r^2\\
\end{array} \right).\nonumber\\
\end{equation}
here $\lambda\in\mathbb{C}$, $\psi$ is the eigenfunction of \eqref{sys11} corresponding to the eigenvalue $\lambda$.
By the condition $U_{t}-V_{x}+[U,V]=0$ , we get
\begin{equation}\label{eq11}
\left\{
\begin{aligned}
{\rm i}q_t+q_{xx}+{\rm i}q^2r_x+\frac{1}{2}q^3r^2=0,
\\
{\rm i}r_t-r_{xx}+{\rm i}r^2q_x-\frac{1}{2}q^2r^3=0.
\end{aligned}
\right.
\end{equation}
This system admits the reduction $r=-q^{*}$, and \eqref{eq11}
becomes just one equation which is the GI equation \eqref{dnlsiii}.

From gauge transformation, we can construct new solution from initial data, i.e, if  there exist some non-singular $T$, such that
\begin{equation}\label{eq22}
\left\{
\begin{aligned}
U^{[1]}=(T_{x}+T~U)T^{-1}.
\\
V^{[1]}=(T_{t}+T~V)T^{-1}.
\end{aligned}
\right.
\end{equation}
where $U^{[1]}$ and $V^{[1]}$ have the same form as $U$ and $V$ with $q$ and $r$ replaced by certain $q^{[1]}$ and $r^{[1]}$.
Therefore, it is crucial to find an algebraic formula for $T$ instead of the \eqref{eq22}.\\

{\centering \bf 2.1 N-fold Darboux transformation for the GI system}\\

The $n-fold$ Darboux transformation for $q^{[n]}$, $r^{[n]}$ of the
GI system has been given in  in\cite{JMP063507}. Since we need this
result for our paper, we cite the main theorem as follows.
\begin{theorem}
Let $\Psi_{i}=\left(\begin{array}{c}
\phi_i\\
\varphi_i
\end{array}
\right)$ $\left(i=1,2,\cdots,n\right)$ be distinct solutions related
to $\lambda_i$ of the spectral problem , then ($q^{[n]}$,$r^{[n]}$)
given by the  following formulae are new solutions of the GI system.
\begin{equation}\label{nsolution}
q^{[n]}=q+2{\rm i}\dfrac{\Omega_{11}}{\Omega_{12}}, r^{[n]}=r-2{\rm i}\dfrac{\Omega_{21}}{\Omega_{22}}.
\end{equation}\\
Here,
(1) for $n=2k$,
\begin{equation}\label{ntt5}
\Omega_{11}=\begin{vmatrix}
$$\phi_{1}&\lambda_{1}\varphi_{1}&\ldots&\lambda_{1}^{n-3}\varphi_{1}&\lambda_{1}^{n-2}\phi_{1}&-\lambda_{1}^{n}\phi_{1}$$\\
\phi_{2}&\lambda_{2}\varphi_{2}&\ldots&\lambda_{2}^{n-3}\varphi_{2}&\lambda_{2}^{n-2}\phi_{2}&-\lambda_{2}^{n}\phi_{2}\\
\vdots&\vdots&\vdots&\vdots&\vdots&\vdots\\
\phi_{n}&\lambda_{n}\varphi_{n}&\ldots&\lambda_{n}^{n-3}\varphi_{n}&\lambda_{n}^{n-2}\phi_{n}&-\lambda_{n}^{n}\phi_{n}\\
\end{vmatrix},
\end{equation}
\begin{equation*}
\Omega_{12}=\begin{vmatrix}
\phi_{1}&\lambda_{1}\varphi_{1}&\ldots&\lambda_{1}^{n-3}\varphi_{1}&\lambda_{1}^{n-2}\phi_{1}&\lambda_{1}^{n-1}\varphi_{1}\\
\phi_{2}&\lambda_{2}\varphi_{2}&\ldots&\lambda_{2}^{n-3}\varphi_{2}&\lambda_{2}^{n-2}\phi_{2}&\lambda_{2}^{n-1}\varphi_{2}\\
\vdots&\vdots&\vdots&\vdots&\vdots&\vdots\\
\phi_{n}&\lambda_{n}\varphi_{n}&\ldots&\lambda_{n}^{n-3}\varphi_{n}&\lambda_{n}^{n-2}\phi_{n}&\lambda_{n}^{n-1}\varphi_{n}\\
\end{vmatrix},
\end{equation*}
\begin{equation*}
\Omega_{21}=\begin{vmatrix}
\varphi_{1}&\lambda_{1}\phi_{1}&\ldots&\lambda_{1}^{n-3}\phi_{1}&\lambda_{1}^{n-2}\varphi_{1}&-\lambda_{1}^{n}\varphi_{1}\\
\varphi_{2}&\lambda_{2}\phi_{2}&\ldots&\lambda_{2}^{n-3}\phi_{2}&\lambda_{2}^{n-2}\varphi_{2}&-\lambda_{2}^{n}\varphi_{2}\\
\vdots&\vdots&\vdots&\vdots&\vdots&\vdots\\
\varphi_{n}&\lambda_{n}\phi_{n}&\ldots&\lambda_{n}^{n-3}\phi_{n}&\lambda_{n}^{n-2}\varphi_{n}&-\lambda_{n}^{n}\varphi_{n}\\
\end{vmatrix},
\end{equation*}
\begin{equation*}
\Omega_{22}=\begin{vmatrix}
\varphi_{1}&\lambda_{1}\phi_{1}&\ldots&\lambda_{1}^{n-3}\phi_{1}&\lambda_{1}^{n-2}\varphi_{1}&\lambda_{1}^{n-1}\phi_{1}\\
\varphi_{2}&\lambda_{2}\phi_{2}&\ldots&\lambda_{2}^{n-3}\phi_{2}&\lambda_{2}^{n-2}\varphi_{2}&\lambda_{2}^{n-1}\phi_{2}\\
\vdots&\vdots&\vdots&\vdots&\vdots&\vdots\\
\varphi_{n}&\lambda_{n}\phi_{n}&\ldots&\lambda_{n}^{n-3}\phi_{n}&\lambda_{n}^{n-2}\varphi_{n}&\lambda_{n}^{n-1}\phi_{n}\\
\end{vmatrix};
\end{equation*}

(2) for $n=2k+1$,
\begin{equation}\label{ntt6}
\Omega_{11}=\begin{vmatrix}
\varphi_{1}&\lambda_{1}\phi_{1}&\ldots&\lambda_{1}^{n-3}\phi_{1}&\lambda_{1}^{n-2}\varphi_{1}&-\lambda_{1}^{n}\phi_{1}\\
\varphi_{2}&\lambda_{2}\phi_{2}&\ldots&\lambda_{2}^{n-3}\phi_{2}&\lambda_{2}^{n-2}\varphi_{2}&-\lambda_{2}^{n}\phi_{2}\\
\vdots&\vdots&\vdots&\vdots&\vdots&\vdots\\
\varphi_{n}&\lambda_{n}\phi_{n}&\ldots&\lambda_{n}^{n-3}\phi_{n}&\lambda_{n}^{n-2}\varphi_{n}&-\lambda_{n}^{n}\phi_{n}\\
\end{vmatrix},
\end{equation}
\begin{equation*}
\Omega_{12}=\begin{vmatrix}
\varphi_{1}&\lambda_{1}\phi_{1}&\ldots&\lambda_{1}^{n-3}\phi_{1}&\lambda_{1}^{n-2}\varphi_{1}&\lambda_{1}^{n-1}\varphi_{1}\\
\varphi_{2}&\lambda_{2}\phi_{2}&\ldots&\lambda_{2}^{n-3}\phi_{2}&\lambda_{2}^{n-2}\varphi_{2}&\lambda_{2}^{n-1}\varphi_{2}\\
\vdots&\vdots&\vdots&\vdots&\vdots&\vdots\\
\varphi_{n}&\lambda_{n}\phi_{n}&\ldots&\lambda_{n}^{n-3}\phi_{n}&\lambda_{n}^{n-2}\varphi_{n}&\lambda_{n}^{n-1}\varphi_{n}\\
\end{vmatrix},
\end{equation*}

\begin{equation*}
\Omega_{21}=\begin{vmatrix}
\phi_{1}&\lambda_{1}\varphi_{1}&\ldots&\lambda_{1}^{n-3}\varphi_{1}&\lambda_{1}^{n-2}\phi_{1}&-\lambda_{1}^{n}\varphi_{1}\\
\phi_{2}&\lambda_{2}\varphi_{2}&\ldots&\lambda_{2}^{n-3}\varphi_{2}&\lambda_{2}^{n-2}\phi_{2}&-\lambda_{2}^{n}\varphi_{2}\\
\vdots&\vdots&\vdots&\vdots&\vdots&\vdots\\
\phi_{n}&\lambda_{n}\varphi_{n}&\ldots&\lambda_{n}^{n-3}\varphi_{n}&\lambda_{n}^{n-2}\phi_{n}&-\lambda_{n}^{n}\varphi_{n}\\
\end{vmatrix},
\end{equation*}
\begin{equation*}
\Omega_{22}=\begin{vmatrix}
\phi_{1}&\lambda_{1}\varphi_{1}&\ldots&\lambda_{1}^{n-3}\varphi_{1}&\lambda_{1}^{n-2}\phi_{1}&\lambda_{1}^{n-1}\phi_{1}\\
\phi_{2}&\lambda_{2}\varphi_{2}&\ldots&\lambda_{2}^{n-3}\varphi_{2}&\lambda_{2}^{n-2}\phi_{2}&\lambda_{2}^{n-1}\phi_{2}\\
\vdots&\vdots&\vdots&\vdots&\vdots&\vdots\\
\phi_{n}&\lambda_{n}\varphi_{n}&\ldots&\lambda_{n}^{n-3}\varphi_{n}&\lambda_{n}^{n-2}\phi_{n}&\lambda_{n}^{n-1}\phi_{n}\\
\end{vmatrix}.
\end{equation*}
\end{theorem}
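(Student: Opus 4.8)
The plan is to realise $(q^{[n]},r^{[n]})$ as the potentials produced by an $n$-th order Darboux matrix and then to verify the covariance relations \eqref{eq22}. Concretely, I would look for a matrix polynomial $T(\lambda;x,t)=\sum_{j=0}^{n}T_{j}(x,t)\lambda^{j}$ of degree $n$ in $\lambda$. The Lax pair \eqref{sys11} is invariant under $\lambda\mapsto-\lambda$ combined with conjugation by $\sigma={\rm i}J=\mathrm{diag}(1,-1)$, that is $\sigma U(\lambda)\sigma=U(-\lambda)$ and $\sigma V(\lambda)\sigma=V(-\lambda)$, and I would impose the matching symmetry $\sigma T(\lambda)\sigma=(-1)^{n}T(-\lambda)$ on the Darboux matrix. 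This forces a checkerboard pattern: the diagonal entries of $T$ contain only monomials $\lambda^{p}$ of one parity, the off-diagonal entries only the other, and the two parities are interchanged depending on whether $n=2k$ or $n=2k+1$ — this is the structural origin of the two cases \eqref{ntt5} and \eqref{ntt6}. In either case the symmetry forces $T_{n}$ to be diagonal (consistent with the requirement that the $\lambda^{2}$-coefficient of $U^{[1]}$ again be $J$), and I would normalise $T_{n}=I$; then $T_{n-1}$ is off-diagonal, and the remaining coefficients of $T$ amount to $2n$ scalar functions of $(x,t)$.

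Next I would pin those functions down by the kernel conditions $T(\lambda_{i};x,t)\Psi_{i}(x,t)=0$, $i=1,\dots,n$. Read componentwise this splits into two decoupled square linear systems of size $n$: the first-row equations $T_{11}(\lambda_{i})\phi_{i}+T_{12}(\lambda_{i})\varphi_{i}=0$ fix the coefficients of $T_{11},T_{12}$, the second-row equations those of $T_{21},T_{22}$, each with inhomogeneous term obtained by moving the normalised top coefficient to the right-hand side. Cramer's rule then writes every coefficient of $T$ as a ratio of $n\times n$ determinants whose columns are precisely the monomials $\lambda_{i}^{p}\phi_{i}$ and $\lambda_{i}^{p}\varphi_{i}$ allowed by the checkerboard pattern; the common denominators are $\Omega_{12}$ and $\Omega_{22}$ of the statement, and replacing the last column by the inhomogeneous vector $-\lambda_{i}^{n}\phi_{i}$, respectively $-\lambda_{i}^{n}\varphi_{i}$, produces $\Omega_{11}$, respectively $\Omega_{21}$. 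That the $\Psi_{i}$ are distinct solutions is what keeps $\Omega_{12}$ and $\Omega_{22}$ generically nonzero, so $T$ is well defined and invertible away from $\{\pm\lambda_{i}\}$.

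Finally — and this is the heart of the matter — I would check that $T$ genuinely intertwines the Lax pairs, i.e. that $U^{[1]}=(T_{x}+TU)T^{-1}$ and $V^{[1]}=(T_{t}+TV)T^{-1}$ have the same structural form as $U,V$ in \eqref{sys11}, with $q,r$ replaced by some $q^{[n]},r^{[n]}$. The symmetry makes $\det T(\lambda)$ a degree-$2n$ polynomial in $\lambda$ with zeros exactly at $\pm\lambda_{i}$, so $T^{-1}$ has at worst simple poles there; since $\Psi_{i}$ spans $\ker T(\lambda_{i})$ and $\partial_{x}\Psi_{i}=U(\lambda_{i})\Psi_{i}$, the residues of $T_{x}T^{-1}+TUT^{-1}$ at $\pm\lambda_{i}$ cancel, so $U^{[1]}$ is polynomial in $\lambda$; a degree count at $\lambda=\infty$ shows it has degree at most $2$, the normalisation of $T_{n}$ pins the $\lambda^{2}$-term to $J$, and the $\sigma$-symmetry inherited by $U^{[1]}$ forces its $\lambda^{1}$-coefficient off-diagonal and its $\lambda^{0}$-coefficient diagonal of the prescribed form. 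Comparing the coefficient of $\lambda^{n+1}$ in $T_{x}+TU=U^{[1]}T$ then gives $Q_{1}^{[1]}=Q_{1}+[T_{n-1},J]$, that is $q^{[n]}=q+2{\rm i}(T_{n-1})_{12}=q+2{\rm i}\,\Omega_{11}/\Omega_{12}$ and $r^{[n]}=r-2{\rm i}(T_{n-1})_{21}=r-2{\rm i}\,\Omega_{21}/\Omega_{22}$, which is \eqref{nsolution}; the same argument for $V^{[1]}$ preserves the $t$-structure, and the zero-curvature condition for $(U^{[1]},V^{[1]})$ is then exactly \eqref{eq11} for $(q^{[n]},r^{[n]})$. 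I expect the main obstacle to lie in this last step: establishing the factorisation of $\det T$ and the residue cancellation, and, above all, keeping the parity conventions straight so that Cramer's rule reproduces the columns of $\Omega_{ij}$ exactly as displayed — which is precisely where the cases $n=2k$ and $n=2k+1$ must be treated separately. (An alternative route would build $T$ as a product of $n$ elementary one-fold Darboux matrices and reduce the assertion to a determinant identity, but the direct ansatz above seems more transparent.)
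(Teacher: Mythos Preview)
The paper does not supply a proof of this theorem: it explicitly quotes the result from reference~\cite{JMP063507} (Xu and He, \emph{J.~Math.~Phys.}~\textbf{53} (2012), 063507), writing ``The $n$-fold Darboux transformation for $q^{[n]}$, $r^{[n]}$ of the GI system has been given in~\cite{JMP063507}. Since we need this result for our paper, we cite the main theorem as follows.'' So there is nothing in the present paper to compare your argument against.

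That said, your plan is the standard Darboux--matrix construction and is, as far as I can see, correct and essentially what the cited reference does: (i) posit a degree-$n$ polynomial $T(\lambda)$ with the $\sigma$-parity symmetry inherited from the Lax pair, which yields the checkerboard structure and explains the split between the cases $n=2k$ and $n=2k+1$; (ii) fix the $2n$ free scalar coefficients by the kernel conditions $T(\lambda_i)\Psi_i=0$ and solve by Cramer's rule, producing exactly the determinants $\Omega_{ij}$; (iii) verify covariance by the residue/degree argument and read off $q^{[n]},r^{[n]}$ from the $\lambda^{n+1}$ coefficient of $T_x+TU=U^{[1]}T$. Your identification $Q_1^{[1]}=Q_1+[T_{n-1},J]$ and the resulting $\pm 2{\rm i}$ factors are correct with $J=\mathrm{diag}(-{\rm i},{\rm i})$. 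The only point I would flag as requiring a little more care than you indicate is step (iii): one must also check that the $\lambda^{0}$-coefficient of $U^{[1]}$ really equals $-\tfrac{{\rm i}}{2}q^{[n]}r^{[n]}\sigma_3$ (not merely some diagonal matrix), and likewise that the lower-order coefficients of $V^{[1]}$ match their prescribed differential-polynomial form in $q^{[n]},r^{[n]}$; this is usually handled either by exploiting the asymptotic expansion of $T$ at $\lambda=\infty$ or by the iterative one-fold argument you mention as an alternative.
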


To solve the GI equation, we need to preserve the reduction
condition $r=-q^*$, i.e, under n steps of Darboux transformation,
$r^{[n]}=-{q^{[n]}}^*$. Therefore, we can choose eigenfunctions
$\psi_k=\left(
\begin{array}{c}
\phi_k\\
\varphi_k
\end{array} \right)$ as follows \cite{JMP063507}:
\begin{enumerate}
\item $\lambda_k=-\lambda_k^*$, and $\psi_k=\left( \begin{array}{c}
\phi_k\\
\varphi_k
\end{array} \right)=\left(
\begin{array}{c}
\varphi_k^*\\
\phi_k^*
\end{array}
\right)$.

\item $\lambda_{2k}=-\lambda_{2k-1}^*$, and $\psi_{2k}=\left( \begin{array}{c}
\phi_{2k}\\
\varphi_{2k}
\end{array} \right)=
\left(\begin{array}{c}
-\varphi_{2k-1}^*\\
\phi_{2k-1}^*
\end{array}\right).$
\end{enumerate}

By this choice, we can construct soliton, breather, position
solutions of the GI equation. But to get higher order rogue wave, we
need to modify above Darboux transformation. At that time, the
eigenfunctions and eigenvalues are no longer arbitrary, and only
condition (2) works. We will discuss them in detail in the next
section.

\section{{\bf Solutions of the Gerdjikov-Ivanov equation}}
{\bf 3.1 Solution with vanishing boundary condition}\\

We start with the soliton solutions by \eqref{nsolution}. Let $q=0$,
solutions of the spectral problem \eqref{sys11} with eigenvalues
$\lambda_k$ are solved as
\begin{equation}\label{fun1}
\psi_k=\left(\begin{array}{c}\phi_k\\\varphi_k\end{array}\right),
\quad \phi_k={\rm exp}(-{\rm i}(\lambda_k^{2}x+2\lambda_k^{4}t)),\quad
\varphi_k={\rm exp}({\rm i}(\lambda_k^{2}x+2\lambda_k^{4}t)).
\end{equation}
Example:

\begin{itemize}
  \item  $n=1$.
   $\lambda_1^{*}=-\lambda_1$,$\lambda_1={\rm i}\beta_1$ and  $\psi_1=\left(\begin{array}{c}\phi_1\\\phi_1^{*}\end{array}\right)$, then
  $$q^{[1]}= 2\,\beta_1\,{{\rm exp}^{-2\,{\rm i}{\beta_1}^{2} \left( -x+2\,{\beta_1}^{2}t
  \right). }}
  $$
 This is a plane wave with constant amplitude.
  \item $n=2$.

  \begin{enumerate}

  \item Let $\lambda_2=\lambda_1^{*}$, $\lambda_1=\alpha_1+{\rm i}\beta_1$ and
  $\psi_1=\left(\begin{array}{c}
  -\phi_1\\
  \varphi_1
  \end{array}\right)
  $
  $\psi_2=\left(\begin{array}{c}
  -\varphi_1^{*} \\
  \phi_1^*
  \end{array}\right)
  $, then
  \begin{equation}\label{q[2]}
  q^{[2]}=-8\,{\frac {{{\rm e}^{F_{{1}}}}\alpha_{{1}}\beta_{{1}}}{-{{\rm e}^{F_{
  {2}}}}\alpha_{{1}}+{\rm i}{{\rm e}^{F_{{2}}}}\beta_{{1}}-{{\rm e}^{-F_{{2}}}
  }\alpha_{{1}}-{\rm i}{{\rm e}^{-F_{{2}}}}\beta_{{1}}}}
   \end{equation}
  with
  \begin{equation}
  \begin{split}
   F_1=&-2\,{\rm i} \left( {\alpha_{{1}}}^{2}x+2\,{\alpha_{{1}}}^{4}t-12\,{\alpha_{{
      1}}}^{2}{\beta_{{1}}}^{2}t-{\beta_{{1}}}^{2}x+2\,{\beta_{{1}}}^{4}t
  \right),\\
  F_2=&4\,\alpha_{{1}}\beta_{{1}} \left( 4\,t{\alpha_{{1}}}^{2}-4\,t{\beta_{{
  1}}}^{2}+x \right).
  \end{split}
  \end{equation}
   It is a line soliton, and its trajectory is
  $$
  x=-4\,t \left( \alpha_{{1}}-\beta_{{1}} \right)  \left( \alpha_{{1}}+
  \beta_{{1}} \right),
  $$
  on the (x, t) plane. Let $\alpha_1 \rightarrow0$ in (\ref{q[2]}), we obtain a rational travelling solution
  \begin{equation}\label{q[21]}
  q^{[2]}=4\,{\frac {\beta_{{1}}{{\rm e}^{-2\,{\rm i}{\beta_{{1}}}^{2} \left( -x+2\,t{
  \beta_{{1}}}^{2} \right) }}}{1-4\,{\rm i}{\beta_{{1}}}^{2}x+16\,{\rm i}{\beta_{{1}
  }}^{4}t}},
  \end{equation}
  with an arbitrary real constant $\beta_1$. Its trajectory is
  defined explicitly by
  $$
  x=4\beta_1^{2}t
  $$
  on the (x, t) plane.  The above two solutions is plotted in Fig. \ref{fig.soliton}(a) and Fig. \ref{fig.soliton}(b).

  \item Let  $\lambda_1=-\lambda_1^*={\rm i}\beta_1$, $\lambda_2=-\lambda_2^*={\rm i}\beta_2$,
  then
  $\psi_1=\left(\begin{array}{c}
  \phi_1\\
  \phi_1^*
  \end{array}
  \right)$,
  $\psi_2=\left(\begin{array}{c}
  \phi_2\\
  \phi_2^*
  \end{array}
  \right)$.
 we obtain a
soliton solution
\begin{equation}\label{q[22]}
q^{[2]}=-2\,{\frac {{{\rm e}^{-{\rm i} \left( -{\beta_{{1}}}^{2}x+2\,{\beta_{{1}}}^{
4}t-{\beta_{{2}}}^{2}x+2\,{\beta_{{2}}}^{4}t \right) }} \left( -{\beta
_{{2}}}^{2}+{\beta_{{1}}}^{2} \right) }{{{\rm e}^{-{\rm i} \left( \beta_{{1}
}-\beta_{{2}} \right)  \left( \beta_{{1}}+\beta_{{2}} \right)  \left(
2\,t{\beta_{{1}}}^{2}-x+2\,{\beta_{{2}}}^{2}t \right) }}\beta_{{2}}-
\beta_{{1}}{{\rm e}^{{\rm i} \left( \beta_{{1}}-\beta_{{2}} \right)  \left(
\beta_{{1}}+\beta_{{2}} \right)  \left( 2\,t{\beta_{{1}}}^{2}-x+2\,{
\beta_{{2}}}^{2}t \right) }}}}.
\end{equation}
Its trajectory is defined explicitly as
$$
x=2\,t \left( {\beta_{{1}}}^{2}+{\beta_{{2}}}^{2} \right)
$$
in the (x, t) plane.
It is plotted in Fig. \ref{fig.soliton}(c).
\end{enumerate}

\item $n=3$. Let $\lambda_i=-\lambda_i^*$, $i=1,\mbox{ }2,\mbox{ }3$, we get a one-soliton solution. While  a $\lambda_1=-\lambda_1^*$, $\lambda_3=\lambda_2^*$,
we obtain a quasi-periodic solution. Since the analytic expressions are clumsy and tedious,  we plot a particular quasi-periodic solution  in Fig. 2(a).

\item $n=4$. Under the different reduction conditions, we obtain three kinds of solutions.
One of them is a soliton solution under the quasi-periodic background, the others are second soliton solutions.
 A soliton solution under the quasi-periodic background is plotted in Fig. 2(b).
\end{itemize}

{\bf Remark}. If the eigenvalues share the same value, $T$ becomes degenerate. Next, we will consider the degenerate cases by limit technique.

\bigskip

Given $\psi=\psi(\lambda)$ , define $\psi[i,j,k]$ by
\begin{equation}
\lambda^j\psi=\psi[i,j,0]+\psi[i,j,1]\epsilon+\psi[i,j,2]\epsilon^2+\cdots+\psi[i,j,k]\epsilon^k+\cdots,
\end{equation}
with
$$
\psi[i,j,k]=\frac{1}{k!}\frac{\partial^k}{\partial\epsilon^k}((\lambda_i+\epsilon)^j\psi(\lambda_i+\epsilon)).
$$
In other words, $\psi[i,j,k]$ is the coefficient of $\epsilon^k$ if we expand $\lambda^j\psi$ at $\lambda=\lambda_i$
by Taylor expansion.

Now, we consider the degenerate case of $n=4$. According to the formula,
\begin{equation}\label{q[4]}
q^{[4]}=\frac{2{\rm i}\delta_{11}}{\delta_{12}}
\end{equation}
with
$$\delta_{11}=
\left| \begin{array}{cccc} \phi_{{1}}&\lambda_{{1}}\psi_{{1}}&{\lambda_{{1}}}^{2}\phi_{{1}}&-{\lambda_{{1}}}^{4}\phi_{{1}}
\\ \noalign{\medskip}\phi_{{2}}&\lambda_{{2}}\psi_{{2}}&{\lambda_{{2}}
}^{2}\phi_{{2}}&-{\lambda_{{2}}}^{4}\phi_{{2}}\\ \noalign{\medskip}
\phi_{{3}}&\lambda_{{3}}\psi_{{3}}&{\lambda_{{3}}}^{2}\phi_{{3}}&-{
\lambda_{{3}}}^{4}\phi_{{3}}\\ \noalign{\medskip}\phi_{{4}}&\lambda_{{
4}}\psi_{{4}}&{\lambda_{{4}}}^{2}\phi_{{4}}&-{\lambda_{{4}}}^{4}\phi_{
{4}}\end{array}  \right|,
$$
$$\delta_{12}=
\left| \begin{array}{cccc} \phi_{{1}}&\lambda_{{1}}\psi_{{1}}&{\lambda_{{1}}}^{2}\phi_{{1}}&{\lambda_{{1}}}^{3}\psi_{{1}}
\\ \noalign{\medskip}\phi_{{2}}&\lambda_{{2}}\psi_{{2}}&{\lambda_{{2}}
}^{2}\phi_{{2}}&{\lambda_{{2}}}^{3}\psi_{{2}}\\ \noalign{\medskip}\phi
_{{3}}&\lambda_{{3}}\psi_{{3}}&{\lambda_{{3}}}^{2}\phi_{{3}}&{\lambda_
{{3}}}^{3}\psi_{{3}}\\ \noalign{\medskip}\phi_{{4}}&\lambda_{{4}}\psi_
{{4}}&{\lambda_{{4}}}^{2}\phi_{{4}}&{\lambda_{{4}}}^{3}\psi_{{4}}
\end{array} \right|.
$$

Notice that these eigenfunction vectors $\psi_i(i=1,3)$ have the same formula except only different eigenvalue $\lambda_i$.
Thus, if we set $\lambda_1=\lambda_3$, the determinants are linearly dependent. Then, we will obtain a trivial solution $q^{[4]}=0$. To get non-trivial solution, we need to adopt the Taylor expansion technique.

The main process:
{\bf i)} For the first(second) row, we can substitute the eigenvalue $\lambda_1$($\lambda_2=\lambda_1^*$) and eigenfunction $\psi_1$($\psi_2$) directly.
{\bf ii)} $\lambda_3\rightarrow\lambda_1$($\lambda_4\rightarrow\lambda_2$), we expand the elements of the third(forth) row at $\lambda_3=\lambda_1+\epsilon$($\lambda_4=\lambda_2+\epsilon$)
by Taylor expansion at first, then subtract the first(second) row from the third(forth) row.
{\bf iii)} Taking $\epsilon\rightarrow0$,
the terms with higher order of $\epsilon$ will vanish.
and the new
expression of $q^{[4]}$ is obtained as the same form as eq. (\ref{q[4]}), but with the values for $\delta_{11}$ and
$\delta_{12}$ given by
\begin{equation}\nonumber
\delta_{11}=\left|
\begin{array}{cccc}
\phi[1,0,0]&\varphi[1,1,0]&\phi[1,2,0]&-\phi[1,4,0]\\
\phi[2,0,0]&\varphi[2,1,0]&\phi[2,2,0]&-\phi[2,4,0]\\
\phi[1,0,1]&\varphi[1,1,1]&\phi[1,2,1]&-\phi[1,4,1]\\
\phi[2,0,1]&\varphi[2,1,1]&\phi[2,2,1]&-\phi[2,4,1]
\end{array}\right|,
\end{equation}
\begin{equation}
\delta_{12}=\left|
\begin{array}{cccc}
\phi[1,0,0]&\varphi[1,1,0]&\phi[1,2,0]&\varphi[1,3,0]\\
\phi[2,0,0]&\varphi[2,1,0]&\phi[2,2,0]&\varphi[2,3,0]\\
\phi[1,0,1]&\varphi[1,1,1]&\phi[1,2,1]&\varphi[1,3,1]\\
\phi[2,0,1]&\varphi[2,1,1]&\phi[2,2,1]&\varphi[2,3,1]
\end{array}
\right|.
\end{equation}

Substituting $\lambda_1=\alpha_1+{\rm i}\beta_1$ and eigenfunctions of eq. (\ref{fun1}) into the above formula. we obtain a positon solution
\begin{equation}\label{2p}
q^{[4]}=\frac{G_1+{\rm i}G_2}{G_3+{\rm i}G_4},
\end{equation}
where
\begin{equation}
\begin{split}
G_1=&\alpha_{{1}}\beta_{{1}},\\
G_2=&-16\cos(F_2)\alpha_1^3(-32\beta_1^3\alpha_1^{2}t\sinh(F_1)+16\beta_1\alpha_1^4t\sinh(F_1)
     +4\beta_1^3x\sinh(F_1)+4\alpha_1^2\beta_1x\sinh(F_1)\\
     &-48\beta_1^5t\sinh(F_1)
     -\alpha_1\cosh(F_1))\beta_1,\\
G_3=&16\,{\alpha_{{1}}}^{4}{\beta_{{1}}}^{2}x+64\,{\alpha_{{1}}}^{6}t{\beta
     _{{1}}}^{2}+2\,\alpha_{{1}}{\beta_{{1}}}^{3}\sinh \left( 2\,F_{{2}}
     \right) -384\,{\alpha_{{1}}}^{4}{\beta_{{1}}}^{4}t-16\,{\alpha_{{1}}}
     ^{2}{\beta_{{1}}}^{4}x\\
     &+2\,{\alpha_{{1}}}^{3}\beta_{{1}}\sinh \left( 2
     \,F_{{2}} \right) +64\,{\alpha_{{1}}}^{2}{\beta_{{1}}}^{6}t,\\
G_4=&-256\,{\alpha_{{1}}}^{2}x{\beta_{{1}}}^{8}t+32\,{\alpha_{{1}}}^{2}{x}^
    {2}{\beta_{{1}}}^{6}+{\alpha_{{1}}}^{4}+64\,{\beta_{{1}}}^{4}{x}^{2}{
    \alpha_{{1}}}^{4}-256\,{\alpha_{{1}}}^{4}x{\beta_{{1}}}^{6}t+256\,{
    \alpha_{{1}}}^{8}{\beta_{{1}}}^{2}xt\\
    &+3072\,{\beta_{{1}}}^{6}{t}^{2}{
    \alpha_{{1}}}^{6}+2048\,{\beta_{{1}}}^{4}{t}^{2}{\alpha_{{1}}}^{8}+32
    \,{\alpha_{{1}}}^{6}{x}^{2}{\beta_{{1}}}^{2}+512\,{\alpha_{{1}}}^{10}{
    t}^{2}{\beta_{{1}}}^{2}+512\,{\beta_{{1}}}^{10}{t}^{2}{\alpha_{{1}}}^{
    2}\\
    &+2048\,{\alpha_{{1}}}^{4}{t}^{2}{\beta_{{1}}}^{8}+{\beta_{{1}}}^{4}+
    256\,{\beta_{{1}}}^{4}x{\alpha_{{1}}}^{6}t+{\alpha_{{1}}}^{4}\cosh
    \left( 2\,F_{{2}} \right) -{\beta_{{1}}}^{4}\cosh \left( 2\,F_{{2}}
    \right),\\
F_1=&4\,\alpha_{{1}}\beta_{{1}} \left( x+4\,t{\alpha_{{1}}}^{2}-4\,t{\beta_
{{1}}}^{2} \right),\\
F_2=&-24\,{\alpha_{{1}}}^{2}t{\beta_{{1}}}^{2}+4\,{\beta_{{1}}}^{4}t+4\,{
\alpha_{{1}}}^{4}t-2\,{\beta_{{1}}}^{2}x+2\,{\alpha_{{1}}}^{2}x.
\end{split}\nonumber
\end{equation}
Its dynamical evolution is plotted in Fig. 3(a).

If we let $\alpha_1\rightarrow0$ in above procedure, we will get the second order rational solution.
With these parameters, the general solution can be given as following:
\begin{equation}\label{2r}
q^{[4]}=\frac{G_1}{G_2+{\rm i}G_3}{\rm exp}({-2\,{\rm i}{\beta_{{1}}}^{2} \left( -x+2\,t{\beta_{{1}}}^{2}
 \right) }),
\end{equation}
with
\begin{equation}\label{2rational}
\begin{split}\nonumber
G_1=&-8\beta_1(-12{\rm i}\beta_1^2x+768{\rm i}\beta_1^8x^2t+4096{\rm i}\beta_1^{12}t^3
     -64{\rm i}\beta_1^6x^3-48{\rm i}\beta_1^4t-3072{\rm i}\beta_1^{10}xt^2-3
     +2304t^2\beta_1^8\\
     &-768x\beta_1^6t+48\beta_1^4x^2),\\
G_2=&3-768\,x{\beta_{{1}}}^{6}t+4096\,{\beta_{{1}}}^{10}t{x}^{3}-24576\,{
\beta_{{1}}}^{12}{x}^{2}{t}^{2}+65536\,{\beta_{{1}}}^{14}x{t}^{3}+4608
\,{t}^{2}{\beta_{{1}}}^{8}-65536\,{\beta_{{1}}}^{16}{t}^{4}\\
&-256\,{
\beta_{{1}}}^{8}{x}^{4}+96\,{\beta_{{1}}}^{4}{x}^{2},\\
G_3=&-12288\,{\beta_{{1}}}^{10}x{t}^{2}+16384\,{\beta_{{1}}}^{12}{t}^{3}-
256\,{\beta_{{1}}}^{6}{x}^{3}-48\,{\beta_{{1}}}^{2}x+3072\,{\beta_{{1}
}}^{8}{x}^{2}t+576\,{\beta_{{1}}}^{4}t.
\end{split}
\end{equation}

The equations (\ref{2p}) and (\ref{2r}) represent the interaction of solitons and rational solitons,
respectively. A simplify analysis shows that they possess phase shift when $t\rightarrow\pm\infty$. Which is
different from  general $2$-soliton solutions. They are shown in Fig. 3.


This process works for general $n$.
\begin{theorem}
For $n=2k$, $\psi=\left(\begin{array}
  {c}\phi\\
  \varphi
\end{array}\right)$ is eigenfunction vector of the GI system,
the expression of $n$-positon solutions in terms of determinant is obtained:
\begin{equation}\label{nposoliton}
q^{[n]}=2{\rm i}\frac{\delta_{11}}{\delta_{12}},
\end{equation}
where
\begin{equation}
\delta_{11}=\left|
\begin{array}{cccccc}
\phi[1,0,0]&\varphi[1,1,0]&\cdots&\varphi[1,n-3,0]&\phi[1,n-2,0]&-\phi[1,n,0]\\
\phi[2,0,0]&\varphi[2,1,0]&\cdots&\varphi[2,n-3,0]&\phi[2,n-2,0]&-\phi[2,n,0]\\
\phi[1,0,1]&\varphi[1,1,1]&\cdots&\varphi[1,n-3,1]&\phi[1,n-2,1]&-\phi[1,n,1]\\
\phi[2,0,1]&\varphi[2,1,1]&\cdots&\varphi[2,n-3,1]&\phi[2,n-2,1]&-\phi[2,n,1]\\
\vdots&\vdots&\vdots&\vdots&\vdots&\vdots\\
\phi[1,0,k-1]&\varphi[1,1,k-1]&\cdots&\varphi[1,n-3,k-1]&\phi[1,n-2,k-1]&-\phi[1,n,k-1]\\
\phi[2,0,k-1]&\varphi[2,1,k-1]&\cdots&\varphi[2,n-3,k-1]&\phi[2,n-2,k-1]&-\phi[2,n,k-1]\nonumber
\end{array}
\right|,
\end{equation}
\begin{equation}
\delta_{12}=\left|
\begin{array}{cccccc}
  \phi[1,0,0]&\varphi[1,1,0]&\cdots&\varphi[1,n-3,0]&\phi[1,n-2,0]&\varphi[1,n-1,0]\\
  \phi[2,0,0]&\varphi[2,1,0]&\cdots&\varphi[2,n-3,0]&\phi[2,n-2,0]&\varphi[2,n-1,0]\\
  \phi[1,0,1]&\varphi[1,1,1]&\cdots&\varphi[1,n-3,1]&\phi[1,n-2,1]&\varphi[1,n-1,1]\\
  \phi[2,0,1]&\varphi[2,1,1]&\cdots&\varphi[2,n-3,1]&\phi[2,n-2,1]&\varphi[2,n-1,1]\\
  \vdots&\vdots&\vdots&\vdots&\vdots&\vdots\\
  \phi[1,0,k-1]&\varphi[1,1,k-1]&\cdots&\varphi[1,n-3,k-1]&\phi[1,n-2,k-1]&\varphi[1,n-1,k-1]\\
  \phi[2,0,k-1]&\varphi[2,1,k-1]&\cdots&\varphi[2,n-3,k-1]&\phi[2,n-2,k-1]&\varphi[2,n-1,k-1]\nonumber
\end{array}
\right|.
\end{equation}
\end{theorem}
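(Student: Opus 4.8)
The plan is to obtain Theorem~2 as a coalescence (confluent) limit of the generic $n$-fold formula of Theorem~1. Fix $n=2k$ and a seed with $r=-q^{*}$ (the $q=0$ background of Section~3.1, as in the $n=4$ computation), and let $\psi(\lambda)=\left(\begin{smallmatrix}\phi(\lambda)\\ \varphi(\lambda)\end{smallmatrix}\right)$ be an eigenfunction of the corresponding Lax pair. For a small parameter $\epsilon\neq0$ apply Theorem~1 with the $n$ \emph{distinct} eigenvalues $\lambda_{2j-1}=\lambda_1+(j-1)\epsilon$ and $\lambda_{2j}=\lambda_2+(j-1)\epsilon$, $j=1,\dots,k$, where $\lambda_2=\lambda_1^{*}$, taking $\psi_{2j-1}=\psi(\lambda_{2j-1})$ and $\psi_{2j}$ the partner of $\psi_{2j-1}$ under the paired reduction of Section~2.1, so that $r^{[n]}=-{q^{[n]}}^{*}$ holds for every $\epsilon\neq0$. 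This gives a solution $q^{[n]}(\epsilon)=q+2{\rm i}\,\Omega_{11}(\epsilon)/\Omega_{12}(\epsilon)$ of the GI equation, and the goal is to show $q^{[n]}(\epsilon)\to 2{\rm i}\,\delta_{11}/\delta_{12}$ as $\epsilon\to0$, with $\delta_{11},\delta_{12}$ as in the statement.

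The heart of the argument is a Vandermonde–confluence computation performed simultaneously on the two $2k\times2k$ determinants. Split the rows of $\Omega_{11}(\epsilon)$ (and identically of $\Omega_{12}(\epsilon)$) into the ``$\lambda_1$-block'' (rows $1,3,\dots,2k-1$) and the ``$\lambda_2$-block'' (rows $2,4,\dots,2k$). In the $\lambda_1$-block every entry of row $j$ is a fixed column function $\lambda^{p}\phi(\lambda)$ or $\lambda^{p}\varphi(\lambda)$ evaluated at $\lambda_1+(j-1)\epsilon$; its Taylor expansion at $\lambda_1$ has $m$-th coefficient exactly $\phi[1,p,m]$ or $\varphi[1,p,m]$ in the notation introduced before (\ref{q[4]}). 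Hence this block equals $V\,\mathrm{diag}(1,\epsilon,\dots,\epsilon^{k-1})$ times the matrix whose $m$-th row lists these Taylor coefficients, plus higher order terms, where $V$ is the Vandermonde matrix in the nodes $0,1,\dots,k-1$. Carrying out the corresponding invertible row operations (Newton forward differences) and dividing row $j$ of each block by $\epsilon^{j-1}$ turns row $j$ into the data $\psi[1,\cdot\,,j-1]$ (resp.\ $\psi[2,\cdot\,,j-1]$) plus $O(\epsilon)$; a final fixed row permutation interleaves the two blocks into the order $[1,\cdot,0],[2,\cdot,0],[1,\cdot,1],[2,\cdot,1],\dots$ of the statement. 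The scalar factors produced by all these steps — a power of $\epsilon$, powers of $\det V$, and the sign of the permutation — are the \emph{same} for $\Omega_{11}(\epsilon)$ and $\Omega_{12}(\epsilon)$, because these two determinants have identical rows and eigenvalue assignments and differ only in their last column ($-\lambda^{n}\phi$ versus $\lambda^{n-1}\varphi$), which the row operations never touch. Therefore the common factor cancels in the ratio, and letting $\epsilon\to0$ yields $q^{[n]}(\epsilon)\to 2{\rm i}\,\delta_{11}/\delta_{12}$ with precisely the Taylor-coefficient determinants claimed.

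Finally one argues that the limit is a genuine, non-trivial GI solution. Since $q^{[n]}(\epsilon)$ solves the GI equation for every $\epsilon\neq0$ and, after removing the common scalar factor, the numerator and denominator are analytic in $\epsilon$ near $0$ with $\delta_{12}\not\equiv0$ as a function of $(x,t)$ — checked directly on the plane-wave eigenfunctions (\ref{fun1}), exactly as in the $n=4$ case above — the family extends analytically to $\epsilon=0$; because the GI equation is a polynomial identity in $q^{[n]}$ and its $x,t$-derivatives, the limit $q^{[n]}(0)=2{\rm i}\,\delta_{11}/\delta_{12}$ is again a solution, and the reduction $r^{[n]}=-{q^{[n]}}^{*}$ passes to the limit by continuity. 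The main obstacle is the bookkeeping of the confluence step: one must verify carefully that all $\epsilon$-powers, Vandermonde determinants and the block-interleaving permutation sign are genuinely common to $\delta_{11}$ and $\delta_{12}$ (so that they cancel) and that the surviving limit rows are exactly the $\psi[i,j,m]$-entries in the stated positions; once this is settled the theorem follows. (The case of odd $n=2k+1$ would be handled identically starting from the $\Omega$-determinants of case~(2) of Theorem~1, which is why the statement is restricted to $n=2k$.)
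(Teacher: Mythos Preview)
Your proposal is correct and follows essentially the same coalescence-limit strategy as the paper: Taylor-expand the rows of the $\Omega$-determinants at $\lambda_1,\lambda_2$, perform elementary row operations so that successive rows pick out the successive Taylor coefficients $\psi[i,j,m]$, and observe that the resulting $\epsilon$-powers and combinatorial factors are common to $\Omega_{11}$ and $\Omega_{12}$ and hence cancel in the ratio before letting $\epsilon\to0$. Your write-up is considerably more explicit than the paper's short sketch (you spell out the node choice $\lambda_1+(j-1)\epsilon$, the Newton forward-difference/Vandermonde mechanism, the interleaving permutation, and the argument that the limit is still a GI solution), but the underlying idea is the same.
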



\begin{proof} For the entries in the first column of $\Omega_{11} \eqref{ntt5}$,
\begin{equation*}
\begin{aligned}
  \phi_1&=\phi[1,0,0],\\
  \phi_2&=\phi[2,0,0],\\
  \phi_3&=\phi[1,0,0]+\phi[1,0,1]\epsilon,\\
  \phi_4&=\phi[2,0,0]+\phi[2,0,1]\epsilon,\\
  \vdots\\
  \phi_{n-1}&=\phi[1,0,0]+\phi[1,0,1]\epsilon+\phi[1,0,2]\epsilon^2+\cdots+\phi[1,0,k-1]\epsilon^{k-1},\\
  \phi_{n}&=\phi[2,0,0]+\phi[2,0,1]\epsilon+\phi[2,0,2]\epsilon^2+\cdots+\phi[2,0,k-1]\epsilon^{k-1},
\end{aligned}
\end{equation*}
Taking the similar procedure to the other entries in $\Omega_{11}, \Omega_{12}$. Finally, the $q^{[n]}$ can be
obtained through simple calculation and certain limit.
\end{proof}


Let $\alpha_1\rightarrow0$ in above formula, we will get the $n$-rational solution.\\

{\bf 3.2 Solutions with non-vanishing boundary condition}\\

In this section, we consider solutions from non-trivial seed, and this will give higher order rogue wave. In general, we start with  $ q = ce^{{\rm i}[ax-(a^{2}+c^{2}a-\frac{c^{4}}{2})t]}$, $a,c\in\mathbb{C}^2$. Then the corresponding eigenfunctions $\psi_k$ associated with $\lambda_k$\\

\begin{eqnarray}\label{efnz}
\left(\mbox{\hspace{-0.2cm}} \begin{array}{c}
 \phi_{k}(\lambda_{k})\\
 \varphi_{k}(\lambda_{k})\\
\end{array}\mbox{\hspace{-0.2cm}}\right)\mbox{\hspace{-0.2cm}}=\mbox{\hspace{-0.2cm}}\left(\mbox{\hspace{-0.2cm}}\begin{array}{c}
 D_1\varpi_1(\lambda_{k})[1,k]+D_2\varpi_2(\lambda_{k})[1,k]-D_2\varpi_1^{\ast}({\lambda_{k}^{\ast}})[2,k]-D_1\varpi_2^{\ast}({\lambda_{k}^{\ast}})[2,k]\\
 D_1\varpi_1(\lambda_{k})[2,k]+D_2\varpi_2(\lambda_{k})[2,k]+D_2\varpi_1^{\ast}({\lambda_{k}^{\ast}})[1,k]+D_1\varpi_2^{\ast}({\lambda_{k}^{\ast}})[1,k]\\
\end{array}\mbox{\hspace{-0.2cm}}\right),
\end{eqnarray}
where

\begin{eqnarray*}
\left(\mbox{\hspace{-0.2cm}}\begin{array}{c}
 \varpi_1(\lambda_{k})[1,k]\\
 \varpi_1(\lambda_{k})[2,k]\\
\end{array}\mbox{\hspace{-0.2cm}}\right)\mbox{\hspace{-0.2cm}}=\mbox{\hspace{-0.2cm}}\left(\mbox{\hspace{-0.2cm}}\begin{array}{c}
\exp(-c_1{(ta-x-2t{\lambda_{k}}^2)}+\frac{{\rm i}(ax-(a^{2}+c^{2}a-\frac{c^{4}}{2})t)}{2}) \\
\frac{({\rm i}a-{\rm i}c^{2}+2{\rm i}{\lambda_{k}}^2+2c_1)}{2{\lambda_{k}}c}\exp({-c_1(ta-x-2t{\lambda_{k}}^2)}-\frac{{\rm i}(ax-(a^{2}+c^{2}a-\frac{c^{4}}{2})t)}{2}) \\\\
\end{array}\mbox{\hspace{-0.2cm}}\right),\\
\end{eqnarray*}
\begin{eqnarray*}
\left(\mbox{\hspace{-0.2cm}}\begin{array}{c}
 \varpi_2(\lambda_{k})[1,k]\\
 \varpi_2(\lambda_{k})[2,k]\\
\end{array}\mbox{\hspace{-0.2cm}} \right)\mbox{\hspace{-0.2cm}}=\mbox{\hspace{-0.3cm}}
\left(\mbox{\hspace{-0.3cm}}\begin{array}{c}
\exp({c_1(ta-x-2t{\lambda_{k}}^2)}+\frac{{\rm i}(ax-(a^{2}+c^{2}a-\frac{c^{4}}{2})t)}{2}) \\
\frac{({\rm i}a-{\rm i}c^{2}+2{\rm i}{\lambda_{k}}^2-2c_1)}{2{\lambda_{k}}c}\exp({c_1(ta-x-2t{\lambda_{k}}^2)}-\frac{{\rm i}(ax-(a^{2}+c^{2}a-\frac{c^{4}}{2})t)}{2}) \\
\end{array}\mbox{\hspace{-0.3cm}}\right)\mbox{\hspace{-0.1cm}},\\
\end{eqnarray*}
\begin{eqnarray*}
\varpi_1(\lambda_{k})=
\left( \begin{array}{c}
 \varpi_1(\lambda_{k})[1,k]\\
 \varpi_1(\lambda_{k})[2,k]\\
\end{array} \right),~~~~~
\varpi_2(\lambda_{k})=
\left( \begin{array}{c}
 \varpi_2(\lambda_{k})[1,k]\\
 \varpi_2(\lambda_{k})[2,k]\\
\end{array} \right),
\end{eqnarray*}
\begin{equation}\label{c1}
c_1=\frac{\sqrt{-c^{4}+2c^{2}a-a^{2}-4a{\lambda_{k}}^{2}-4{\lambda_{k}}^{4}}}{2}.
\end{equation}

 In the following, we set $D_1=1$ and $D_2=1$. Under the circumstance, Xu {\sl et. al.}\cite{JMP063507} have obtained two kinds of breather solutions. One is time periodic breather, and the other is space periodic breather.  Besides, the first order and the second order rogue wave solutions have also been obtained by a certain limit from breather solution. However, the method
 is difficult to calculate the higher order rogue waves. What we do here is to take the limit in the determinant expression of solution for the GI equation directly, which is the same method as we do for the NLS equation\cite{arxiv12093742}.


\begin{proposition}
  Let $n=2k$, $\psi=\left(\begin{array}
  {c}\phi\\
  \varphi
 \end{array}\right)$ is eigenfunction vector,
assuming
$$\lambda_1=\frac{1}{2}\sqrt{c^2-2a}-\frac{1}{2}{\rm i}c$$
and
$$\lambda_2=\frac{1}{2}\sqrt{c^2-2a}+\frac{1}{2}{\rm i}c,$$
the the formula of the $n$-th order rogue solution is obtained
\begin{equation}\label{nRW}
  q^{[n]}=q+2{\rm i}\frac{\delta_{11}}{\delta_{12}},
\end{equation}
where
\begin{equation}
\delta_{11}=\left|
\begin{array}{cccccc}
\phi[1,0,1]&\varphi[1,1,1]&\cdots&\varphi[1,n-3,1]&\phi[1,n-2,1]&-\phi[1,n,1]\\
\phi[2,0,1]&\varphi[2,1,1]&\cdots&\varphi[2,n-3,1]&\phi[2,n-2,1]&-\phi[2,n,1]\\
\phi[1,0,2]&\varphi[1,1,2]&\cdots&\varphi[1,n-3,2]&\phi[1,n-2,2]&-\phi[1,n,2]\\
\phi[2,0,2]&\varphi[2,1,2]&\cdots&\varphi[2,n-3,2]&\phi[2,n-2,2]&-\phi[2,n,2]\\
\vdots&\vdots&\vdots&\vdots&\vdots&\vdots\\
\phi[1,0,k]&\varphi[1,1,k]&\cdots&\varphi[1,n-3,k]&\phi[1,n-2,k]&-\phi[1,n,k]\\
\phi[2,0,k]&\varphi[2,1,k]&\cdots&\varphi[2,n-3,k]&\phi[2,n-2,k]&-\phi[2,n,k]\nonumber
\end{array}
\right|,
\end{equation}
\begin{equation*}
  \delta_{12}=\left|\begin{array}{cccccc}
  \phi[1,0,1]&\varphi[1,1,1]&\cdots&\varphi[1,n-3,1]&\phi[1,n-2,1]&\varphi[1,n-1,1]\\
  \phi[2,0,1]&\varphi[2,1,1]&\cdots&\varphi[2,n-3,1]&\phi[2,n-2,1]&\varphi[2,n-1,1]\\
  \phi[1,0,2]&\varphi[1,1,2]&\cdots&\varphi[1,n-3,2]&\phi[1,n-2,2]&\varphi[1,n-1,2]\\
  \phi[2,0,2]&\varphi[2,1,2]&\cdots&\varphi[2,n-3,2]&\phi[2,n-2,2]&\varphi[2,n-1,2]\\
  \vdots&\vdots&\vdots&\vdots&\vdots&\vdots\\
  \phi[1,0,k]&\varphi[1,1,k]&\cdots&\varphi[1,n-3,k]&\phi[1,n-2,k]&\varphi[1,n-1,k]\\
  \phi[2,0,k]&\varphi[2,1,k]&\cdots&\varphi[2,n-3,k]&\phi[2,n-2,k]&\varphi[2,n-1,k]
  \end{array}\right|.
\end{equation*}
\end{proposition}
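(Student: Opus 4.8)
The plan is to obtain the rogue-wave formula \eqref{nRW} as the confluent limit of the $n$-positon formula \eqref{nposoliton}–\eqref{nRW}'s predecessor, i.e.\ Theorem~2, by the same degeneration argument already carried out in full for $n=4$ just before the statement. The starting point is the $n$-fold Darboux formula \eqref{nsolution}–\eqref{ntt5} of Theorem~1 with $n=2k$, evaluated on the $2k$ eigenfunctions coming in conjugate pairs under reduction condition (2): for $j=1,\dots,k$ we set $\lambda_{2j-1}=\lambda_1+\epsilon_j$, $\lambda_{2j}=\lambda_2+\epsilon_j=-\lambda_{2j-1}^{\ast}$, with the eigenfunctions $\psi_{2j-1},\psi_{2j}$ given by \eqref{efnz} (where now $\lambda_1,\lambda_2$ are the \emph{special} values $\tfrac12\sqrt{c^2-2a}\mp\tfrac12{\rm i}c$ forced on us because, as remarked after condition (2), only those values make the limit produce a rogue wave rather than a generic positon). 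The identity I would use is the elementary determinant fact that for a matrix whose rows occur in blocks depending analytically on parameters $\epsilon_1,\dots,\epsilon_k$, subtracting rows and factoring out powers of the $\epsilon_j$ converts the determinant, in the limit $\epsilon_j\to0$, into a determinant of Taylor coefficients — precisely the $\psi[i,j,\ell]$ notation defined in the Remark.

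Concretely the steps are: (i) In $\Omega_{11}$ and $\Omega_{12}$ of \eqref{ntt5}, use multilinearity in the rows to replace the rows indexed $3,4$ by (row$_3-$row$_1$)$/\epsilon$, (row$_4-$row$_2$)$/\epsilon$; then rows $5,6$ by second-order divided differences against rows $1,3$ (equivalently $1,2$ after the first reduction), and so on, so that after dividing the $(2j-1)$-th and $(2j)$-th row-pair by $\epsilon^{\,j-1}$ the determinant picks up an overall factor $\epsilon^{0+1+\cdots+(k-1)}$ in numerator and denominator which cancels in the ratio $\delta_{11}/\delta_{12}$. (ii) Pass to the limit $\epsilon\to0$: each divided difference of order $\ell$ tends to the Taylor coefficient, so the $(2j-1,2j)$ row-pair becomes the pair $\phi[1,\cdot,j-1],\varphi[1,\cdot,j-1]$ and $\phi[2,\cdot,j-1],\varphi[2,\cdot,j-1]$ in the column pattern $(0,1,\dots,n-3,n-2,n)$ for $\delta_{11}$ and $(0,1,\dots,n-3,n-2,n-1)$ for $\delta_{12}$, exactly as in the $n=4$ computation displayed above. (iii) Observe the index shift: for the vanishing-boundary positon of Theorem~2 the surviving Taylor indices ran $0,1,\dots,k-1$, whereas here the eigenfunctions \eqref{efnz} carry an explicit $[1,k]$/$[2,k]$ structure and, at the special eigenvalues $\lambda_1,\lambda_2$, the leading ($\ell=0$) Taylor coefficient degenerates (the two branches $\varpi_1,\varpi_2$ coincide because $c_1=0$ there, by \eqref{c1}), so the first nonvanishing coefficient is the $\ell=1$ one; hence the indices run $1,2,\dots,k$, which is what appears in the displayed $\delta_{11},\delta_{12}$. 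Finally, because we started from a nonzero seed $q=ce^{{\rm i}[ax-(a^2+c^2a-c^4/2)t]}$, formula \eqref{nsolution} contributes the additive term $q$, giving $q^{[n]}=q+2{\rm i}\,\delta_{11}/\delta_{12}$ as claimed.

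The routine parts are the multilinear row operations and the bookkeeping of which column carries $\phi$ versus $\varphi$ (inherited verbatim from Theorem~1). The one genuine point requiring care — and the step I expect to be the main obstacle — is justifying the index shift in step (iii): one must check that at $\lambda=\lambda_1$ (resp.\ $\lambda_2$) the zeroth-order term of $\lambda^{\,j}\psi_k$ is the \emph{same} vector across the two members of a confluent pair, so that it disappears after the row subtraction and the expansion effectively begins at order $\epsilon^{1}$; this is exactly the condition $-c^{4}+2c^{2}a-a^{2}-4a\lambda^{2}-4\lambda^{4}=0$ at $\lambda=\lambda_{1,2}$, i.e.\ $c_1=0$ in \eqref{c1}, which is why those particular eigenvalues (double roots of the discriminant) are singled out in the hypothesis. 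Granting this, taking the limit is legitimate termwise since every entry of \eqref{efnz} is entire in $\epsilon$, and the proof reduces, as the authors say, to "simple calculation and certain limit" — the same argument as Theorem~2, run with the nonzero background and the shifted range of Taylor indices.
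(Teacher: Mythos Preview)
Your proposal is correct and follows essentially the same route as the paper's own argument: Taylor-expand the entries of $\Omega_{11},\Omega_{12}$ from Theorem~1 at the confluent eigenvalues, perform the row reductions, and pass to the limit. The paper's proof is in fact terser than yours---it simply \emph{asserts} the expansions $\phi_1=\phi[1,0,1]\epsilon$, $\phi_3=\phi[1,0,1]\epsilon+\phi[1,0,2]\epsilon^2$, etc., without explaining why the zeroth Taylor coefficient is absent, and then invokes ``simple calculation and certain limit.'' You have correctly isolated this index shift (from $0,\dots,k-1$ in Theorem~2 to $1,\dots,k$ here) as the genuine content of the proposition, and correctly traced it to the vanishing $c_1(\lambda_{1,2})=0$ in \eqref{c1}; this is more than the paper spells out. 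One small point of phrasing: your sentence about the zeroth term being ``the same vector across the two members of a confluent pair, so that it disappears after the row subtraction'' conflates two mechanisms. The relevant fact (which is what the paper's displayed expansion encodes) is that the particular combination \eqref{efnz} itself has $\psi[i,j,0]=0$ at the special eigenvalue---the zeroth coefficient vanishes \emph{before} any row operation, not as a consequence of one. That is why even row~1 already starts at order $\epsilon$. Your earlier remark that ``the first nonvanishing coefficient is the $\ell=1$ one'' is the accurate formulation; the row-subtraction language is a leftover from the Theorem~2 argument and should be dropped here.
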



\begin{proof} For the entries in the first column of $\Omega_{11} \eqref{ntt5}$,
\begin{equation*}
\begin{aligned}
  \phi_1&=\phi[1,0,1]\epsilon,\\
  \phi_2&=\phi[2,0,1]\epsilon,\\
  \phi_3&=\phi[1,0,1]\epsilon+\phi[1,0,2]\epsilon^2,\\
  \phi_4&=\phi[2,0,1]\epsilon+\phi[2,0,2]\epsilon^2,\\
  \vdots\\
  \phi_{n-1}&=\phi[1,0,1]\epsilon+\phi[1,0,2]\epsilon^2+\cdots+\phi[1,0,k]\epsilon^k,\\
  \phi_{n}&=\phi[2,0,1]\epsilon+\phi[2,0,2]\epsilon^2+\cdots+\phi[2,0,k]\epsilon^k,
\end{aligned}
\end{equation*}
Taking the similar procedure to the other entries in $\Omega_{11}, \Omega_{12}$. Finally, the $q^{[n]}$ can be
obtained through simple calculation and certain limit.
\end{proof}

As application, we give some explicit  solutions next.
\begin{itemize}
\item  first order rogue wave\\
For $n=2$, we get the first order rogue wave.
\begin{equation}\label{1RW}
q^{[2]}=\frac{G_1}{G_2}{\rm exp}\left(\frac{1}{2}\,{\rm i} \left(2\,ax-2\,t{a}^{2}-2\,t{c}^{2}a+t{c}^{4} \right)\right) ,
\end{equation}
with
\begin{equation}
\begin{split}
  G_1=&-8\,{c}^{2}{a}^{3}{t}^{2}+12\,{a}^{2}{c}^{4}{t}^{2}+8\,{c}^{2}{a}^{2}t
       x-8\,{c}^{4}tax-2\,{c}^{2}a{x}^{2}+12{\rm i}a{c}^{2}t-6\,a{c}^{6}{t}^{2}-3
       \\&-2{\rm i}{c}^{2}x+2\,{c}^{4}{x}^{2}
       -6{\rm i}{c}^{4}t+2\,{c}^{8}{t}^{2},\\
  G_2=&-8\,{c}^{2}{a}^{3}{t}^{2}+12\,{a}^{2}{c}^{4}{t}^{2}+8\,{c}^{2}{a}^{2}t
       x-8\,{c}^{4}tax-2\,{c}^{2}a{x}^{2}+4{\rm i}{c}^{2}ta-6\,a{c}^{6}{t}^{2}+1\\
       &-2{\rm i}{c}^{2}x+2\,{c}^{4}{x}^{2}
       +2{\rm i}{c}^{4}t+2\,{c}^{8}{t}^{2}.
  \end{split}\nonumber
\end{equation}

Use this method, we also get similar result of \cite{JMP063507}. Moreover, our method is general to product higher order rogue wave.
Assuming $x\rightarrow\infty$, $t\rightarrow\infty$, then $q^{[2]}\rightarrow{c^2}$. The maximum amplitude of
$q^{[2]}$ occurs at origin and is equal to $9c^2$. A first order rogue wave with particular parameters is shown in Fig. 4.\\


\item Higher order Rogue Wave

Generally, the expression of rogue wave becomes more complicated with increasing $n$. For convenience, we use numerical
simulations to discuss the higher order rogue wave. In the following, set $a=0$ and $c=1$.

When $n=4$, we can obtain the second order rogue wave according to the formula (\ref{nRW}).
\begin{equation}
q^{[4]}=\frac{G_1}{G_2}{\rm exp}\left(\frac{1}{2}{\rm i}t\right),
\end{equation}
with
\begin{equation*}
  \begin{split}
    G_1=&45+8\,{t}^{6}-24\,{\rm i}{t}^{4}x-144\,{\rm i}{t}^{3}{x}^{2}-48\,{\rm i}{t}^{2}{x}^{3}-
         72\,{\rm i}t{x}^{4}+576\,{\rm i}{t}^{2}x+288\,{\rm i}t{x}^{2}\\&-144\,{t}^{3}x-504\,{t}^{2}
         {x}^{2}-144\,t{x}^{3}-24\,{\rm i}{x}^{5}-528\,{\rm i}{t}^{3}+414\,{\rm i}t+90\,{\rm i}x-72\,{\rm i}{
         t}^{5}\\
         &+48\,{\rm i}{x}^{3}+24\,{t}^{2}{x}^{4}+24\,{t}^{4}{x}^{2}+504\,tx
         -198\,{x}^{2}-486\,{t}^{2}-60\,{x}^{4}-60\,{t}^{4}+8\,{x}^{6},\\
    G_2=&9+48\,{t}^{3}x-216\,{t}^{2}{x}^{2}+48\,t{x}^{3}+24\,{\rm i}{t}^{5}-24\,{\rm i}{x}^
         {5}+24\,{\rm i}t{x}^{4}+48\,{\rm i}{t}^{3}{x}^{2}+198\,{\rm i}t\\&-24\,{\rm i}{t}^{4}x+180\,{t}^{
         4}+24\,{t}^{2}{x}^{4}+24\,{t}^{4}{x}^{2}+8\,{t}^{6}+8\,{x}^{6}-12\,{x}
         ^{4}+666\,{t}^{2}+90\,{x}^{2}\\&-72\,tx+336\,i{t}^{3}-48\,{\rm i}{x}^{3}-54\,{\rm i}x+288\,{\rm i}{t}^{2}x-48\,{\rm i}{t}^{2}{x}^{3}.
  \end{split}
\end{equation*}

Besides, we can also obtain the $k$-th $(k=3,4,5,6,7)$ order rogue
wave solution of the GI equation. Since their analysis expressions
are too cumbersome,  we omit them. Their dynamical evolutions are
displayed in Fig. \ref{fig.nrw}. From the figures, we find these
local peaks all own a high amplitude on their center, and there are
many small peaks locating around the central peak. Compared with the
higher order rogue wave of the NLS equation, the central high peak
of higher order rogue wave of  the GI equation looks like higher
than the same order of the NLS equation. However, through detailed
analysis, we find the amplitude of the $k$-th order rogue wave
solution of the GI equation and the NLS equation  are both $2(k+1)c$
($c$ is the boundary condition of the
seed solution)\cite{arxiv12093742}.
\end{itemize}



\section{{\bf The dynamics of Rogue Wave}}
In above section, we  obtain the fundamental pattern of the higher
order rogue wave solution for the GI equation with $D_1=1$ and
$D_2=1$ in  (\ref{efnz}). Actually, $D_{1}$ and $D_{2}$ are
arbitrary constant (or go to constant). In this section, we set
$D_1$ and $D_2$ as following:
\begin{equation}
\left\{
\begin{aligned}
D_1=&\exp(-{\rm i}c_1(S_0+S_1\epsilon+S_2\epsilon^2+S_3\epsilon^3+\cdots+S_{k-1}\epsilon^{k-1})),\\
D_2=&\exp({\rm i}c_1(S_0+S_1\epsilon+S_2\epsilon^2+S_3\epsilon^3+\cdots+S_{k-1}\epsilon^{k-1})).
\end{aligned}
\right.
\end{equation}
Here $S_1, S_2,\ldots,S_{k-1}$ $\in$ $\mathbb{C}$. Although the terms with nonzero orders of $\epsilon$ vanish in the $\epsilon\rightarrow0$ limit, analysis and numerics prove that their coefficients
$S_i(i=1,2,\ldots,k-1)$ have a crucial effect on the dynamics of higher order rogue wave. In the following, our main task is to talk about how do these parameters control these different spatial-temporal structures at the same order $k$.\\

{\bf 4.1. Solutions with one parameter}\\

\begin{itemize}
\item $S_0$: Fundamental pattern

When $S_i=0$ $(i\neq0)$, we obtain a trivial translation. A special case for the first order rogue wave is shown in Fig. \ref{fig.1rw1}. It just change the location of the rogue wave. Actually we can shift the location of rogue wave to arbitrary position by hanging the value of $S_0$. The case for the NLS equation had been given in\cite{Communtheorphys56631}. Moveover, the $k$ order rogue waves have $n(n+1)/2$ waves. Starting from large negative t, n small peaks, then a row of $n-1$ larger peaks etc., the central high amplitude wave appears. The process is reversed in positive t. The dynamical evolution can be observed for the NLS equation distinctly in \cite{PRE84056611}.


\item $S_1$: Two triangular structure
\begin{enumerate}
\item Triangular structure

Let all each coefficient $S_i=0$ except $S_1$. The higher
order rogue wave solution of the GI equation split into a triangular structure. This structure of $k$-th order rogue wave contains $k(k+1)/2$  first order fundamental patterns, which make up $k$  successive arrays, and these arrays possess $k, k-1,\cdots,1$ peaks respectively.  These structures of the $k$-th $(k=2,5,7)$ order rogue wave are shown in Fig. \ref{fig.rwtri}. From the Fig. \ref{fig.rwtri}(c), we observe seven arrays, and each of them has $7,6,5,4,3,2,1$ local maxima respectally. And the orientation of these triangular structures in the $(x,t)$-plane remain the same. Our result is a general  case of \cite{PLA3752782}. \\

\item Modified triangular structure

Actually, the outer triangular is independent of the inner triangular in above triangular structure. For instance, when $k=5$, the outer $12$ local maxima and the inner $3$ fundamental patterns both make up a triangular, but these tow triangles are irrelevant. If we alter the appearance of (\ref{efnz}) and mixing
coefficients $D_1$ and $D_2$ as following:
\begin{equation}\label{mix2}
\left(\mbox{\hspace{-0.2cm}}\begin{array}{c}
 \phi_{k}(\lambda_{k})\\
 \varphi_{k}(\lambda_{k})\\
\end{array}\mbox{\hspace{-0.2cm}}\right)=\left(\mbox{\hspace{-0.2cm}}
\begin{array}{c}
 D_1\varpi_1(\lambda_{k})[1,k]+D_1\varpi_2(\lambda_{k})[1,k]-D_2\varpi_1^{\ast}({\lambda_{k}^{\ast}})[2,k]-D_2\varpi_2^{\ast}({\lambda_{k}^{\ast}})[2,k]\\
 D_1\varpi_1(\lambda_{k})[2,k]+D_1\varpi_2(\lambda_{k})[2,k]+D_2\varpi_1^{\ast}({\lambda_{k}^{\ast}})[1,k]+D_2\varpi_2^{\ast}({\lambda_{k}^{\ast}})[1,k]\\
\end{array}
\mbox{\hspace{-0.2cm}}\right),
\end{equation}
with
\begin{equation}
\left\{
\begin{aligned}
D_1=&\exp(-{\rm i}c_1^2(S_0+S_1\epsilon+S_2\epsilon^2+S_3\epsilon^3+\cdots+S_{k-1}\epsilon^{k-1})),\\
D_2=&\exp({\rm i}c_1^2(S_0+S_1\epsilon+S_2\epsilon^2+S_3\epsilon^3+\cdots+S_{k-1}\epsilon^{k-1})).
\end{aligned}
\right.
\end{equation}
Remark: the expression of  \eqref{mix2} is different with \eqref{efnz}.

The inner three peaks form a second fundamental pattern inversely. It is a new structure which has never been obtained for the NLS equation.
Its dynamical evolution is shown in Fig. \ref{fig.rwtri1}.  For $k$-th order rogue wave, we conjecture that there are $3k-3$ first order rogue wave locating the outer triangular shell and an  $(k-3)$-th order rogue wave locating in the center.  \\
\end{enumerate}

\item $S_{k-1}$: Ring structure

If $S_i=0$ $(i\neq k-1)$, we obtain a ring structure. A high maximum peak is surrounded with some local maxima. By simple analysis, we find the peaks locating on the outer shell are all first order rogue waves,
and the number of them is $2k-1$. Besides,
the inner peak is a higher order rogue wave (except $k=3$), whose order is $k-2$ (when $k=3$, the inner peak is a first order rogue wave). For example, when $k=7$, there are $13$ first order rogue wave locating on the outer shell, and a fifth order rogue waves locates in the center. Some of these structures are shown in
Fig. \ref{fig.ring}.\\
\begin{enumerate}
  \item $S_{k-1}$, $S_1$: Ring-triangle

  If $S_1$ is also non-zero, the inner higher order rogue wave will become a triangular structure. For instance, when $k=5$, the inner second order rogue wave is split into a triangle with $S_4=5\times10^9$, $S_1= 500$. Besides, when $k=6,7$, the similar structure is obtained.  Therefore, we can conclude reasonably that the inner higher order rogue wave is able to split into  triangular structure when $S_{k-1}$ is big enough and $S_1\neq0$. We display some of these special models in Fig. \ref{fig.ringtriangle}. Notably, the orientation of the triangle in $7$-th ring-triangle model is different with the others. \\

  \item $S_{k-1}$, $S_{k-3}$: Multi-ring

  Being similar to the above case, the inner higher order rogue wave can be split into a ring structure too. Keeping $S_{k-1}\gg0$ and setting $S_{k-3}\neq0$ ($k>3$), we will obtain a ring structure. In this case, if the central peak is still a higher order rogue wave, we can continue splitting it into first order fundamental
  models. For example, when $k=6$ with $S_5=8\times10^{9}$, $S_3=80000$, we obtain a multi-ring structure with a seconde order rogue wave locating in the center. Its dynamical evolution is shown in Fig. \ref{fig.multiring1}.
  Under this circumstance, assuming $S_1\neq0$, then the second order rogue wave is split into a triangular structure. Its evolution is displayed in Fig. \ref{fig.multiring2}. Naturally, these similar structures are also able to  found for $k=7$. Since the third order rogue wave possesses two structures i.e. triangular structure and ring structure, the inner structure of the seventh can be both triangle structure and ring model. Their evolution is shown in Fig. \ref{fig.multiring3}.
\end{enumerate}
\end{itemize}

In general, there are $k-1$ free parameters for $k$-th order rogue
wave solution, which possesses $2^k$ combinations. But there are
only 4 basic models, and by choosing proper parameters, we also get
dynamical evolutions of the 4 basic patterns. Therefore, we can get
a hierarchy of higher order rogue wave of the GI equation with
combination structure of above four basic models.

\section{\bf Conclusion}

In this paper, we modified the generalized Darboux transformation to
get explicit solutions for the GI equation. In the case of vanishing
boundary condition, we obtain soliton, rational traveling soliton,
breather type soliton, and soliton colliding with breather type
soliton. The last two kinds of solutions are new for the GI
equation. Moreover, we get the formula by applying the Taylor
expansion and limit technique, when eigenvalues share the same
value. Under the condition of non-vanishing boundary condition, we
give the formula of the $N$-rogue wave solution for the GI equation
in proposition 1. As applications, we give the expressions of higher
order rogue wave solutions and discuss their structures. Further
more, we generate solutions with different structures by adjusting
the free parameters $D_1$ and $D_2$. In summary, there are four
basic patterns for higher order rogue waves of the GI equation:
fundamental pattern, triangular structure, modified structure and
ring structure. By choosing proper parameter, we can get solutions
with combination structure of basic ones.

Our results clearly show the connection between shift parameters and
structures of rogue wave due to the explicit formula. These types of
solutions are new to the GI equation. Moreover, some basic models
such as the fundamental pattern, ring structure and triangular
structure have been found in higher rogue waves for other equations
such as the NLS equation \cite{PRE80026601}, but the modified
triangular structure is new to the GI equation. Meanwhile, we get a
family of solutions with combination structures. All of these will
help us in founding univeral properties for rogue waves and better
understanding of ``waves appear from nowhere and disappear without a
trice'' \cite{PLA373675}.

{\bf Acknowledgments}
This work is supported by the NSF of China under Grant No.11271210, No.10971109 and K. C. Wong
Magna Fund in Ningbo University. Jingsong He is also supported by Natural Science Foundation of
Ningbo under Grant No. 2011A610179. We want to thank Prof. Yishen Li (USTC, Hefei, China) for
his long time support and useful suggestions.



\newpage

  \begin{figure*}[!t]
  \centering
  \subfigure[]{\includegraphics[height=4cm,width=4cm]{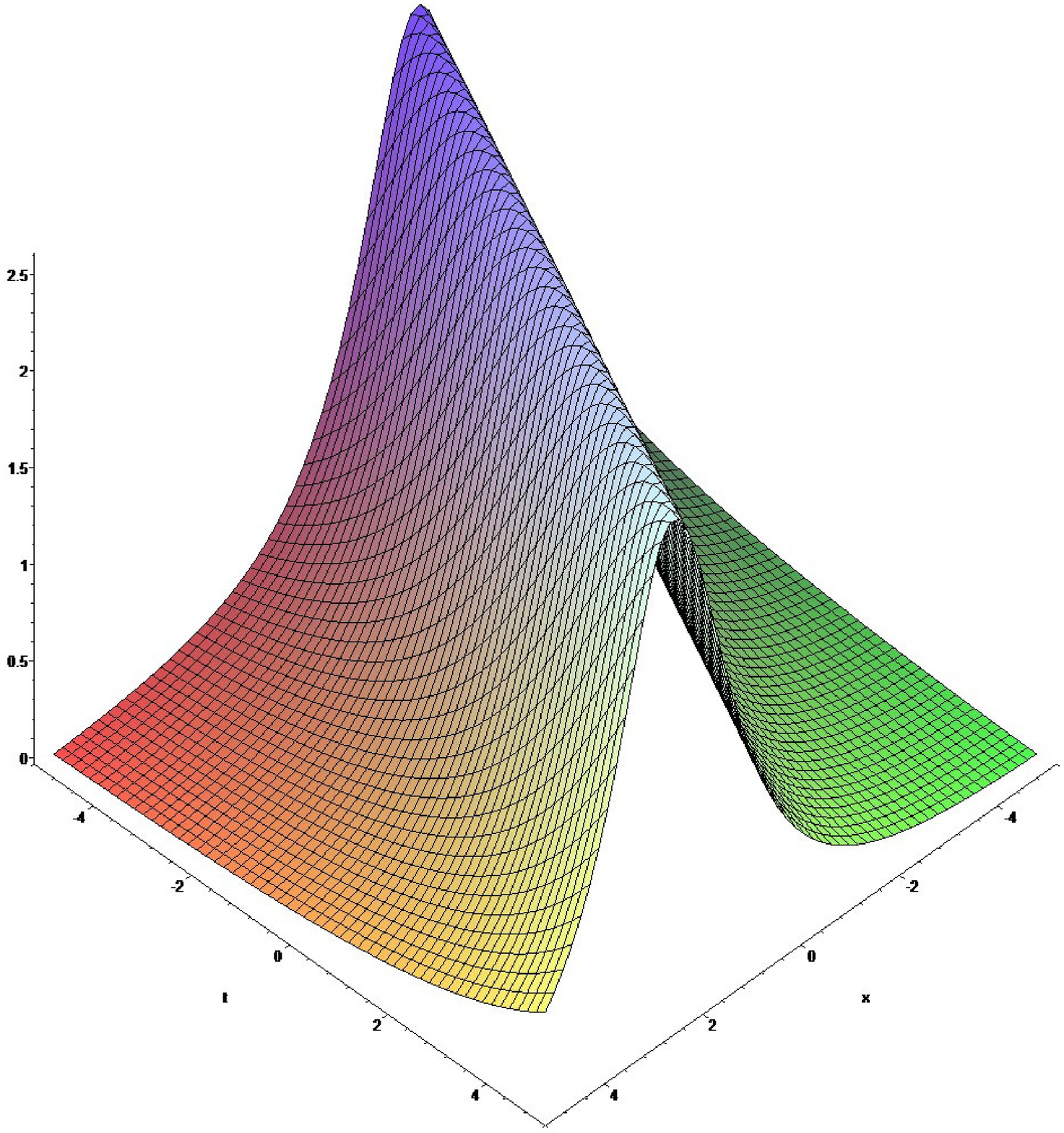}}
  \qquad
  \subfigure[]{\includegraphics[height=4cm,width=4cm]{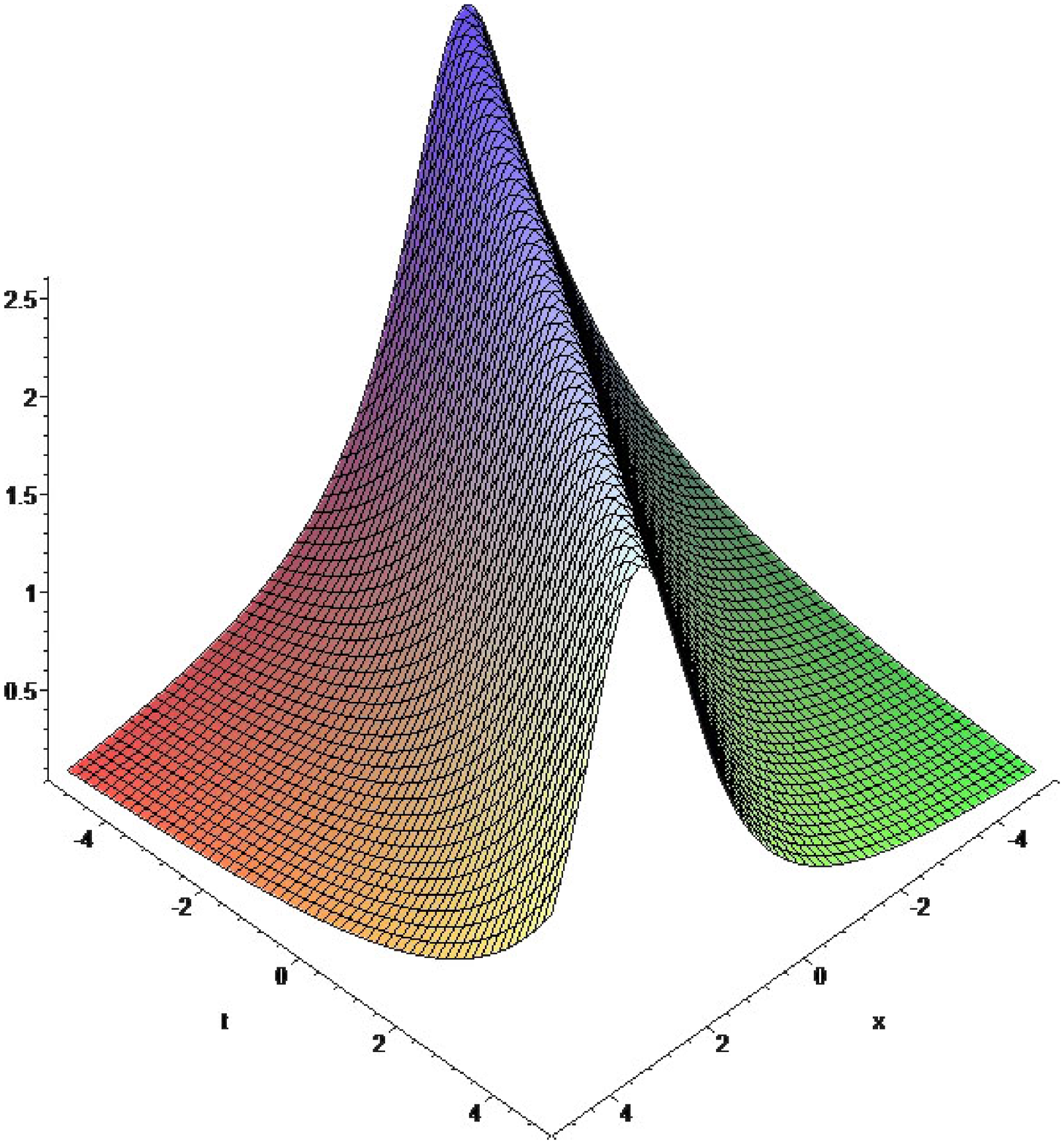}}
  \qquad
  \subfigure[]{\includegraphics[height=4cm,width=4cm]{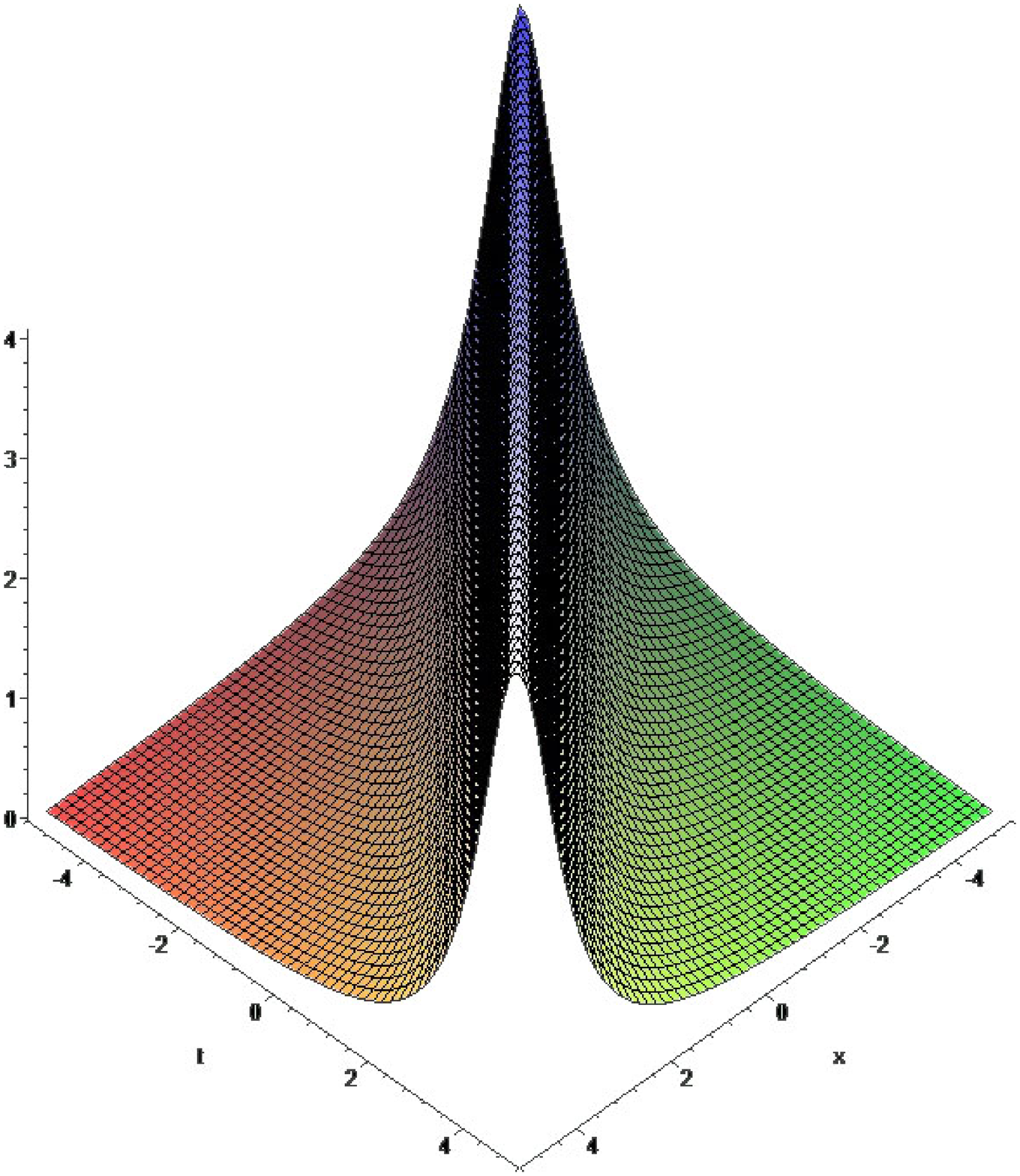}}
  \caption{(Color online) The dynamics of $\left|q^{[2]}\right|^2$. (a)  Eq. (\ref{q[2]}) soliton solution with $\alpha_1=0.2$, $\beta_1=0.4$. (b) Eq. (\ref{q[21]}) rational travelling solution with $\beta_1=0.4$. (c) Eq. (\ref{q[22]}) soliton solution with $\beta_1=0.45$, $\beta_2=0.55$.}\label{fig.soliton}
  \end{figure*}

  \begin{figure*}[!t]
    \centering
    \subfigure[]{\includegraphics[height=4cm,width=4cm]{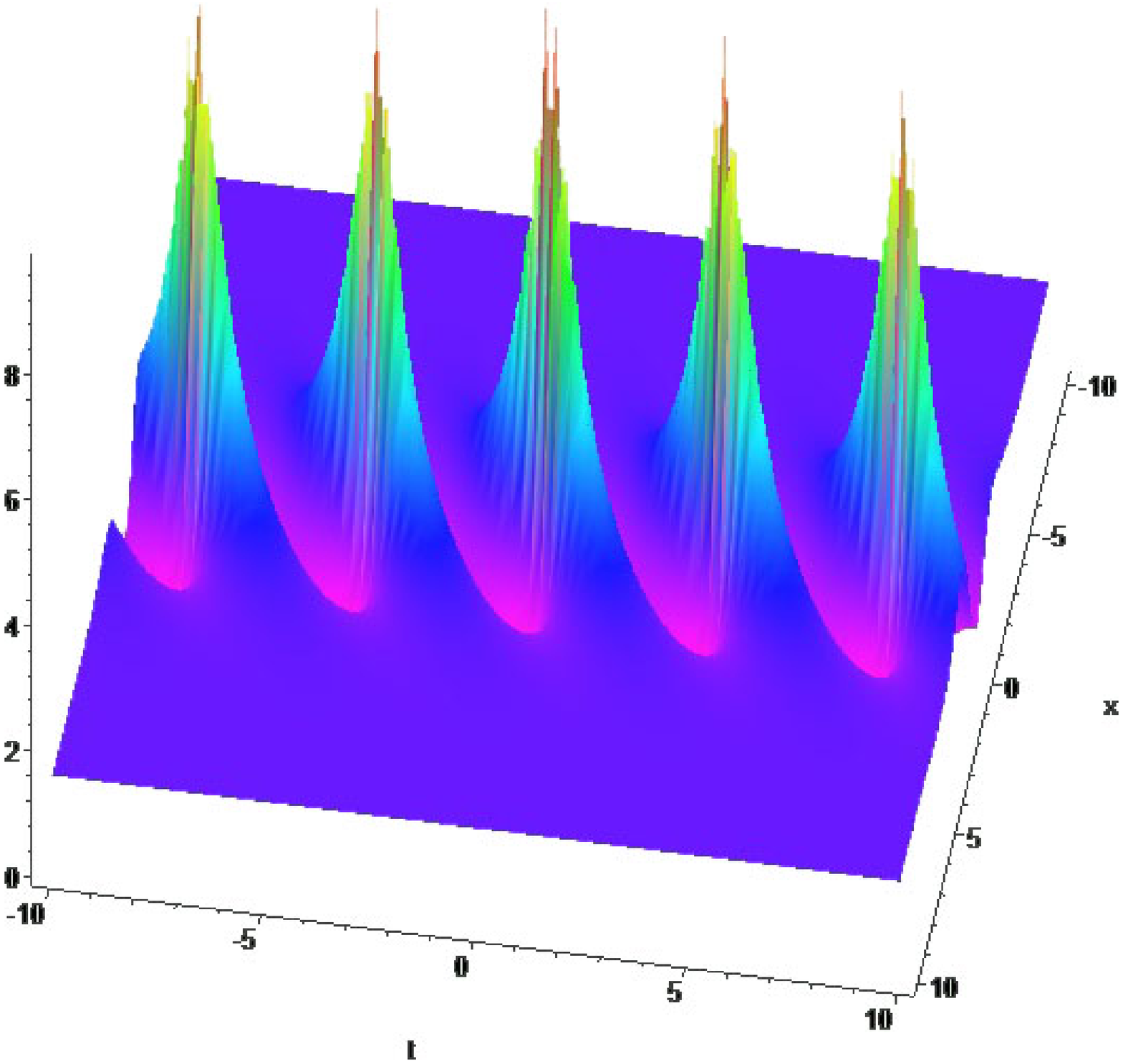}}\qquad
    \subfigure[]{\includegraphics[height=4cm,width=4cm]{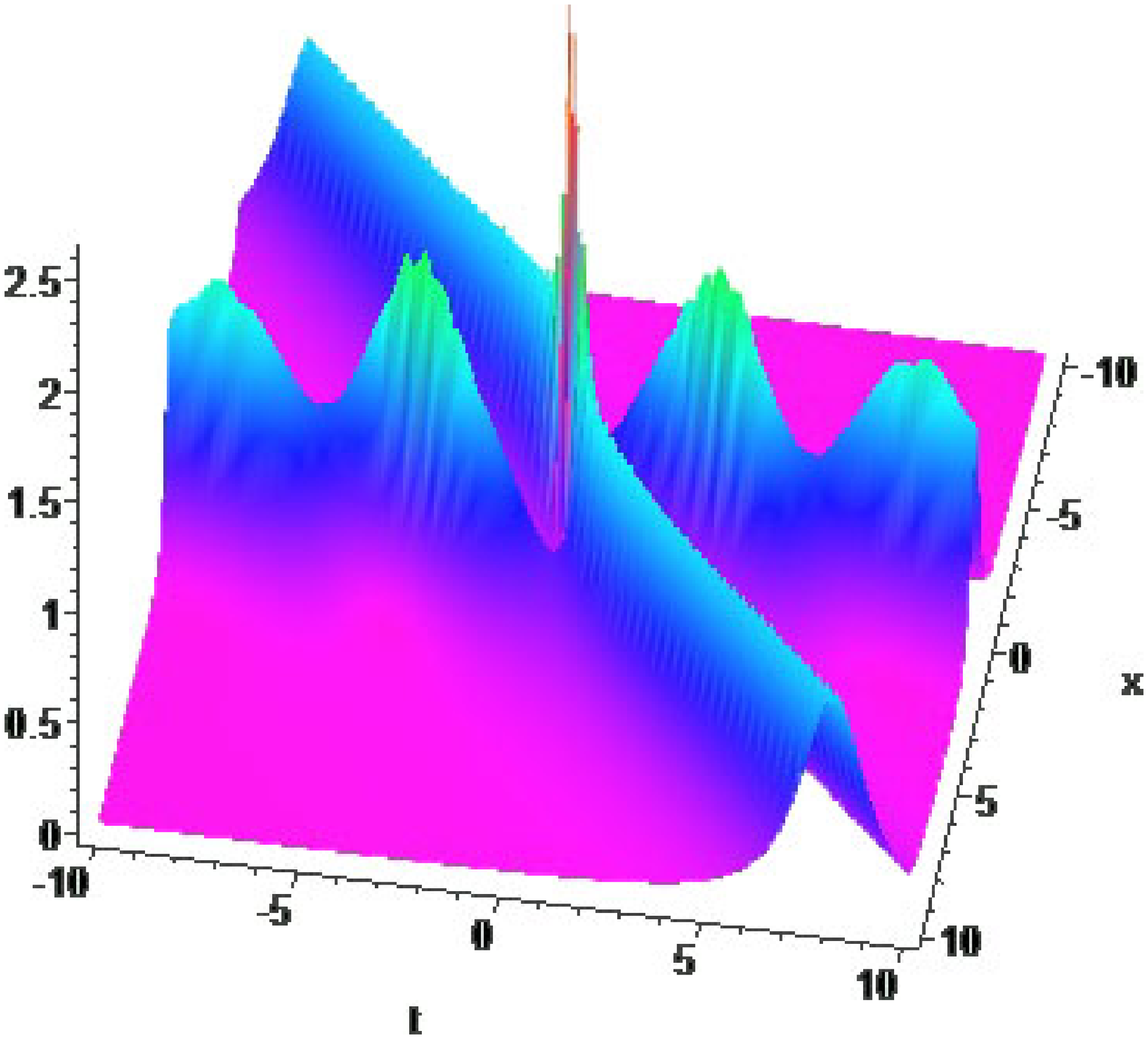}}
    \caption{(Color online) (a) The dynamics of soliton solution quasi-periodic solution on the ($x$, $t$) plane. (b) soliton solution under the quasi-periodic background on the ($x$, $t$) plane.}
   \end{figure*}



\begin{figure*}[!htp]
\centering
\subfigure[]{\includegraphics[height=4cm,width=4cm]{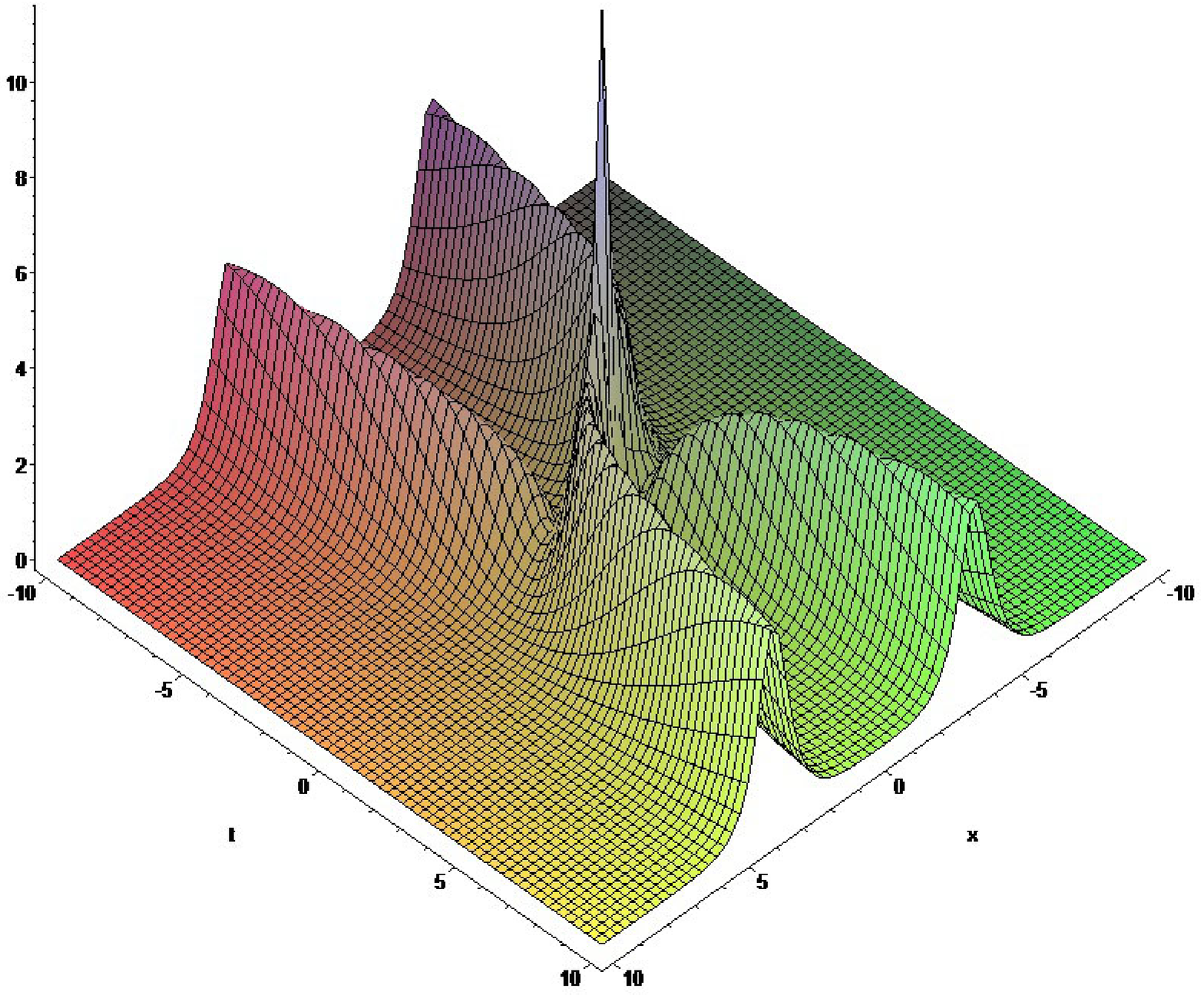}}
\quad
\subfigure[]{\includegraphics[height=4cm,width=4cm]{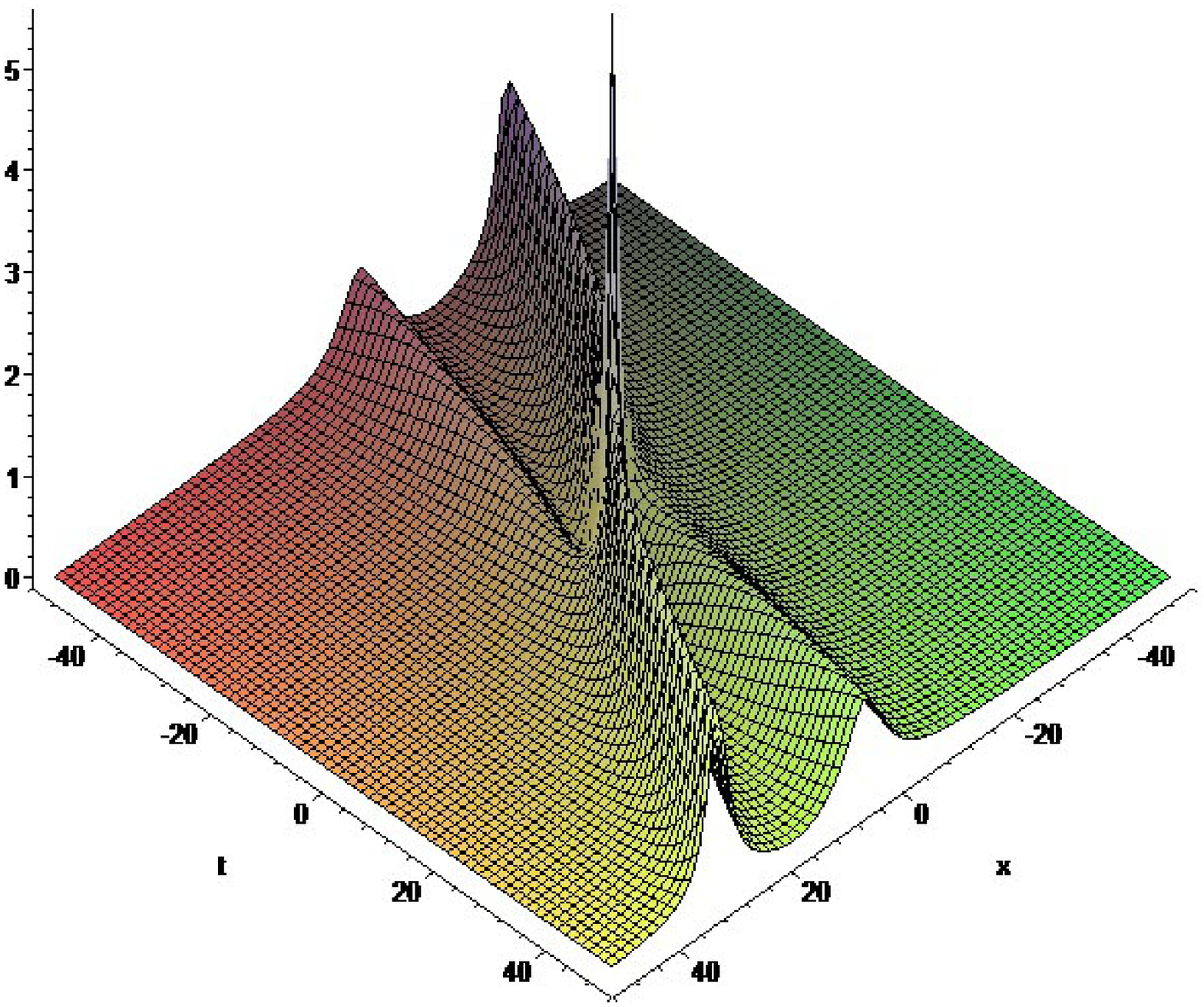}}
\caption{(Color online) The dynamics of $2$-posoliton solution and $2$-rational solution on the (x, t) plane. (a)
Eq. (\ref{2p}) with $\alpha_1=0.5$, $\beta_1=0.5$. (b) Eq. (\ref{2p}) with $\alpha_1\rightarrow0$.}
\end{figure*}

\begin{figure*}[!htp]
\centering
\subfigure{\includegraphics[height=4cm,width=4cm]{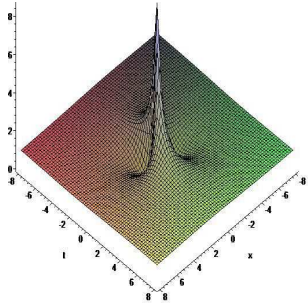}}
\caption{(Color online) The dynamics of the first order rogue wave solution $\left|q^{[2]}\right|^2$ Eq. (\ref{1RW})
on the (x, t) plane  with $a=0$ and $c=1$. The maximum amplitude of
$\left|q^{[2]}\right|^2$ occurs at $t=0$ and $x=0$ and are equal to $9 $.}\label{fig.1rw}
\end{figure*}



\begin{figure*}[!t]
\centering
\subfigure[]{\includegraphics[height=4cm,width=4cm]{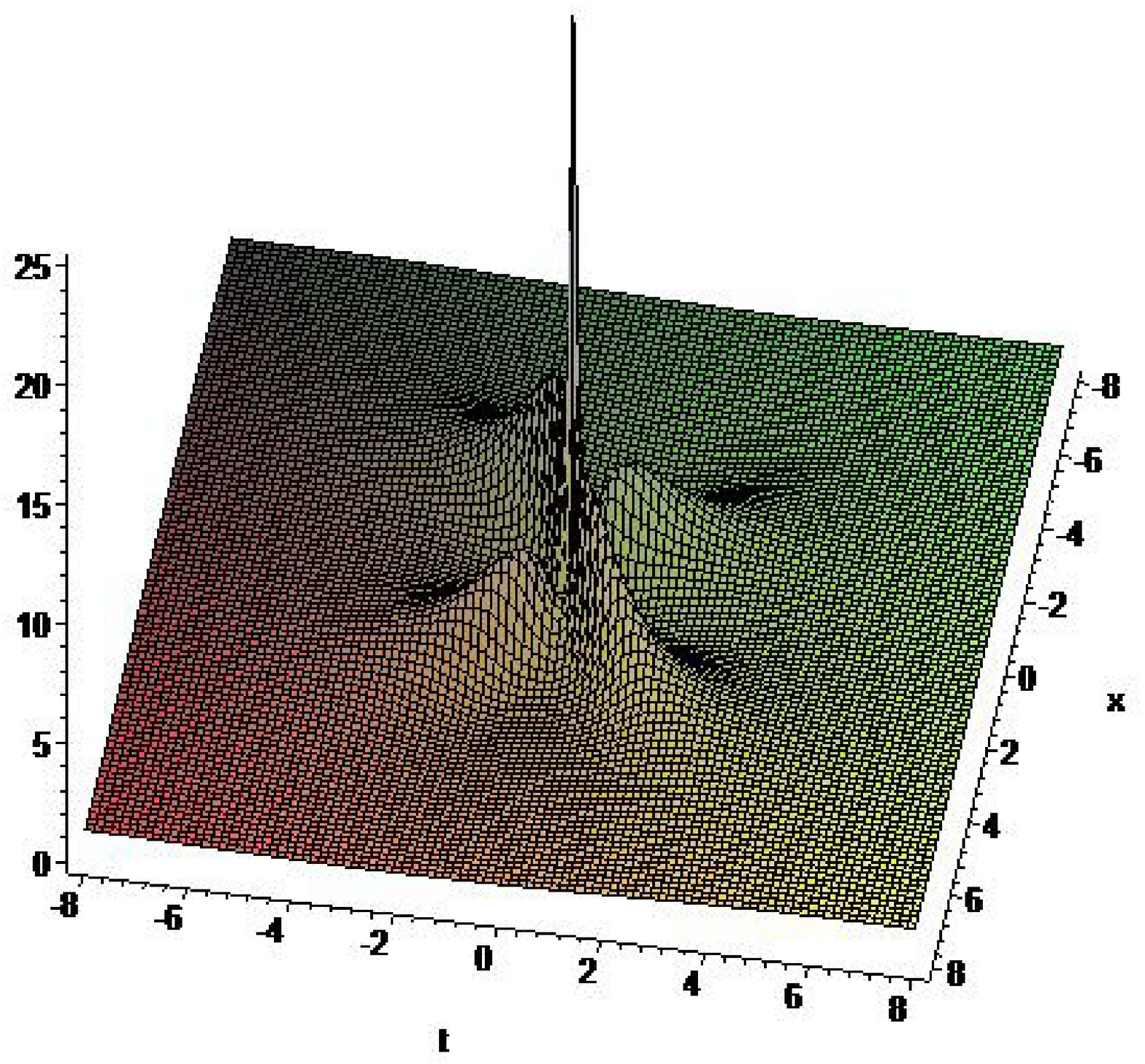}}
\qquad
\qquad
\subfigure[]{\includegraphics[height=4cm,width=4cm]{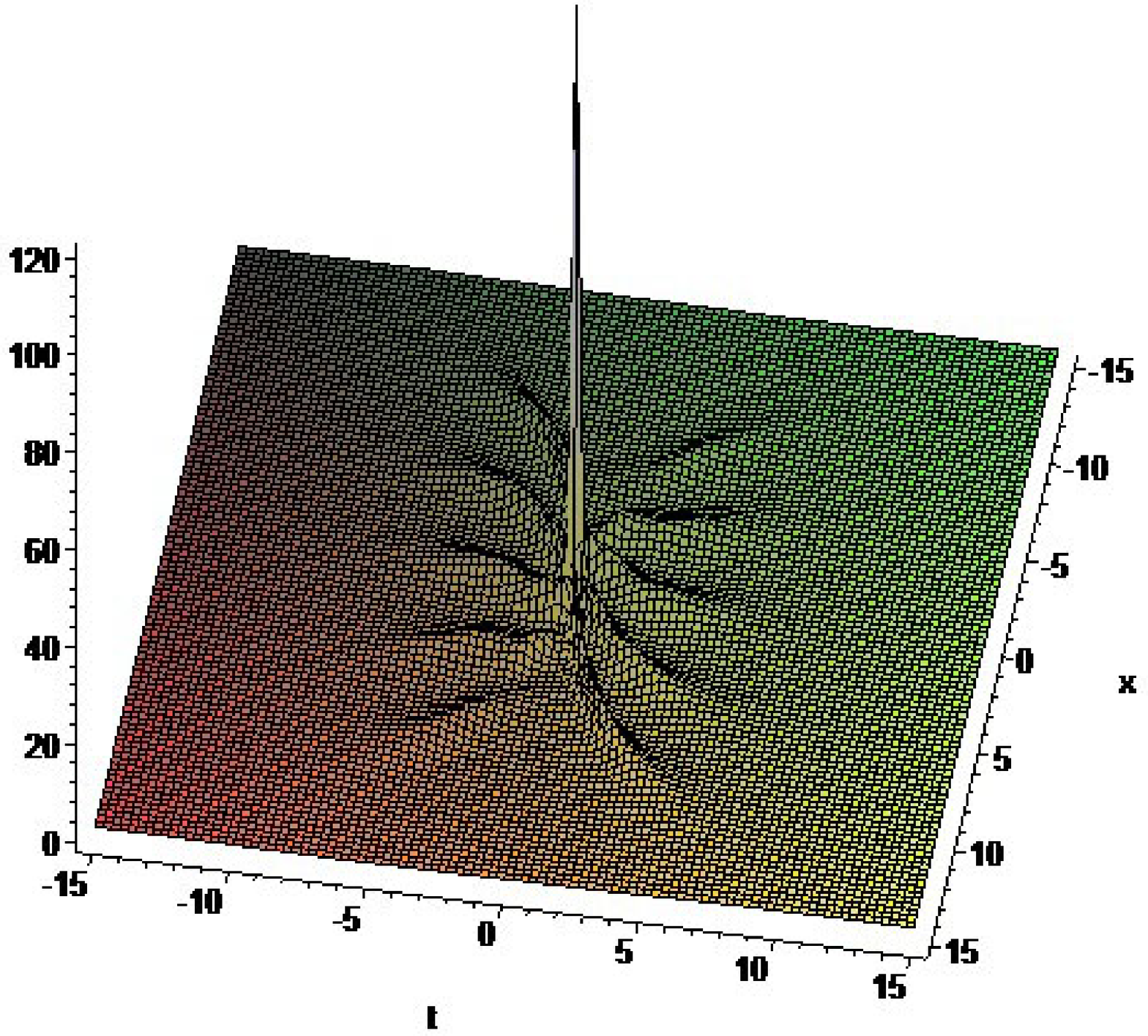}}
\qquad
\qquad
\subfigure[]{\includegraphics[height=4cm,width=4cm]{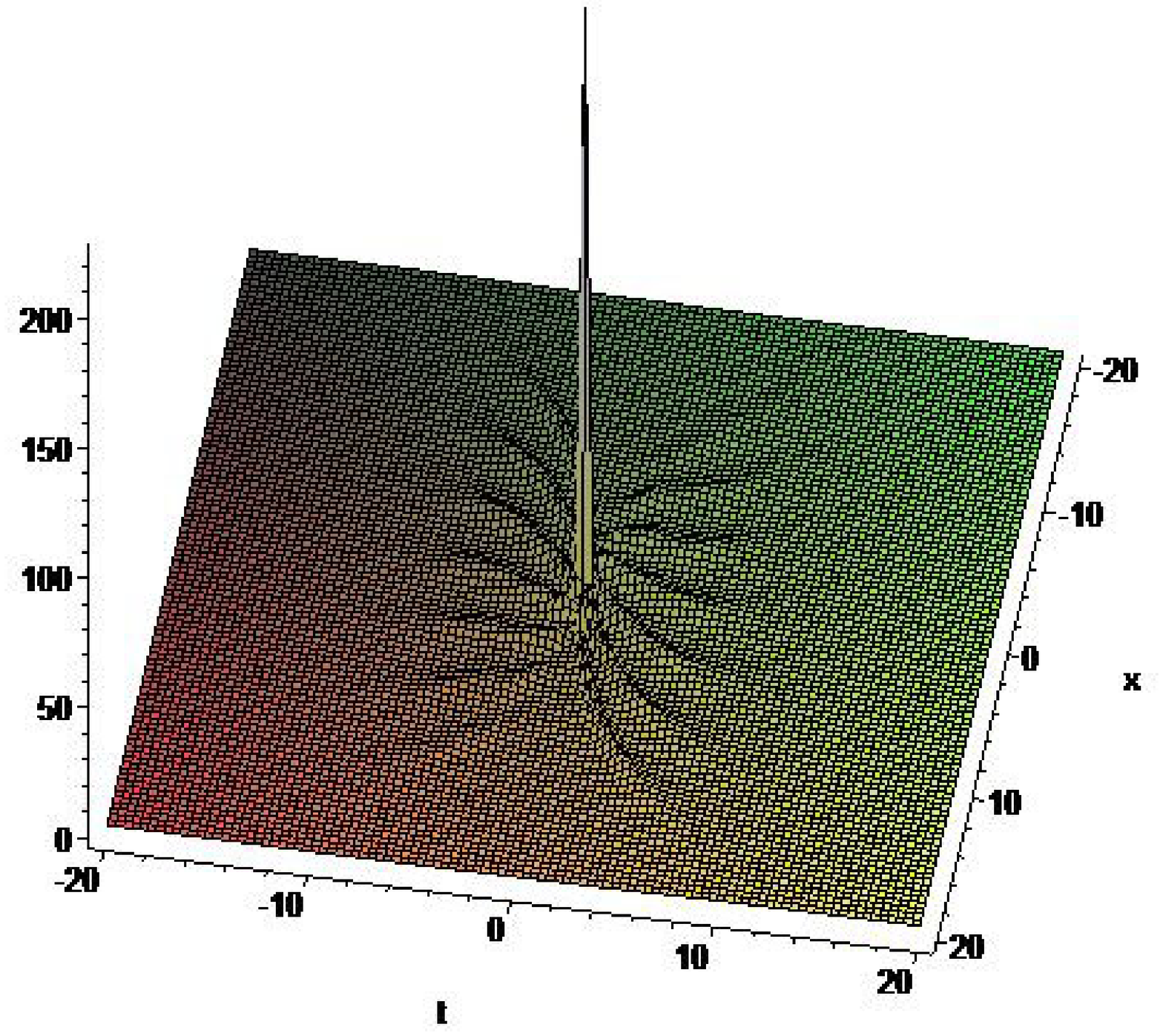}}
\caption{(Color online) The dynamics of higher order rogue wave. (a) The second order rogue wave.
 (b) The fifth order rogue wave.  (f) The seventh order rogue wave.}\label{fig.nrw}
\end{figure*}

\begin{figure*}[!t]
\centering
\subfigure[]{\includegraphics[height=4cm,width=4cm]{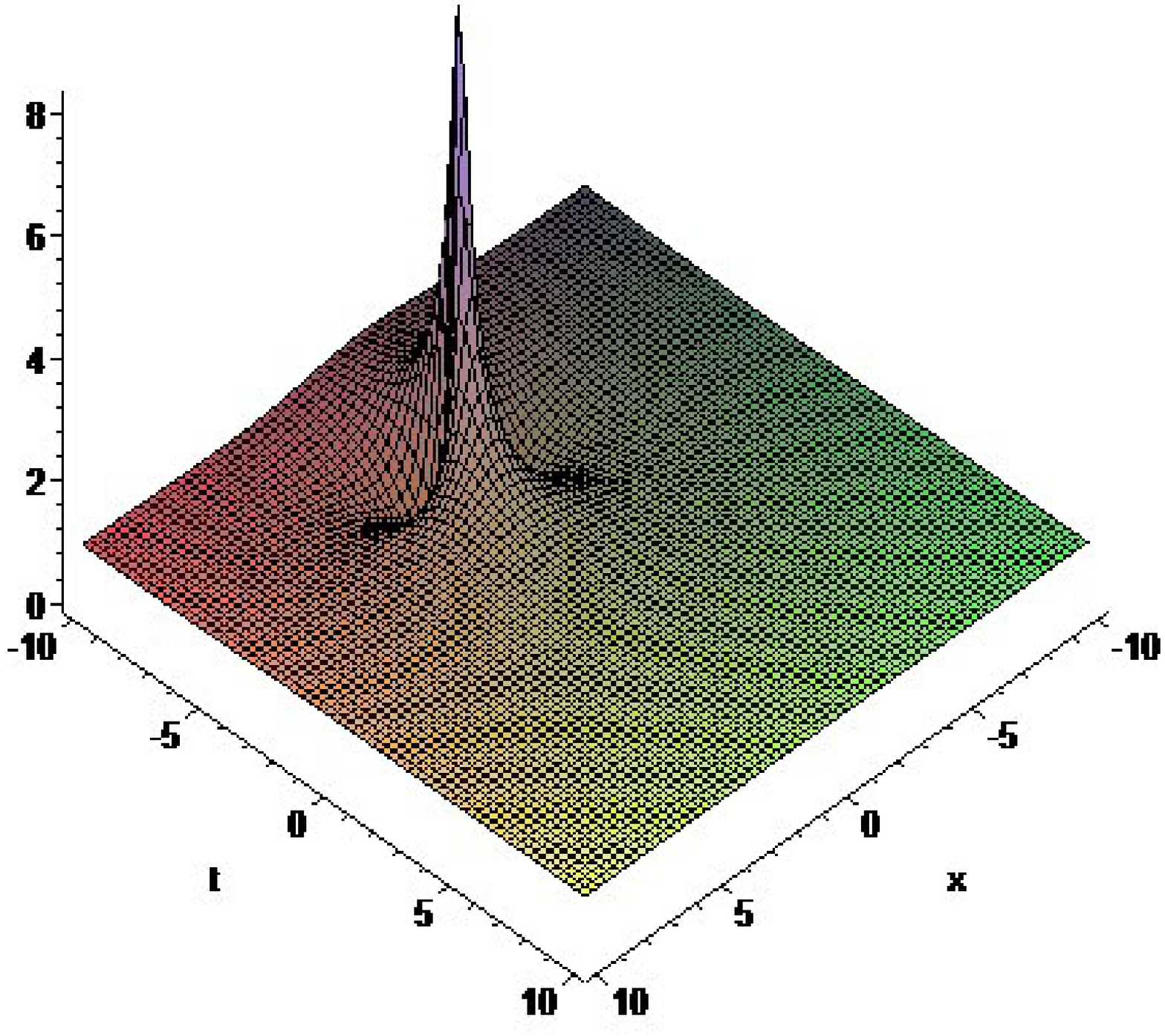}}
\qquad
\subfigure[]{\includegraphics[height=4cm,width=4cm]{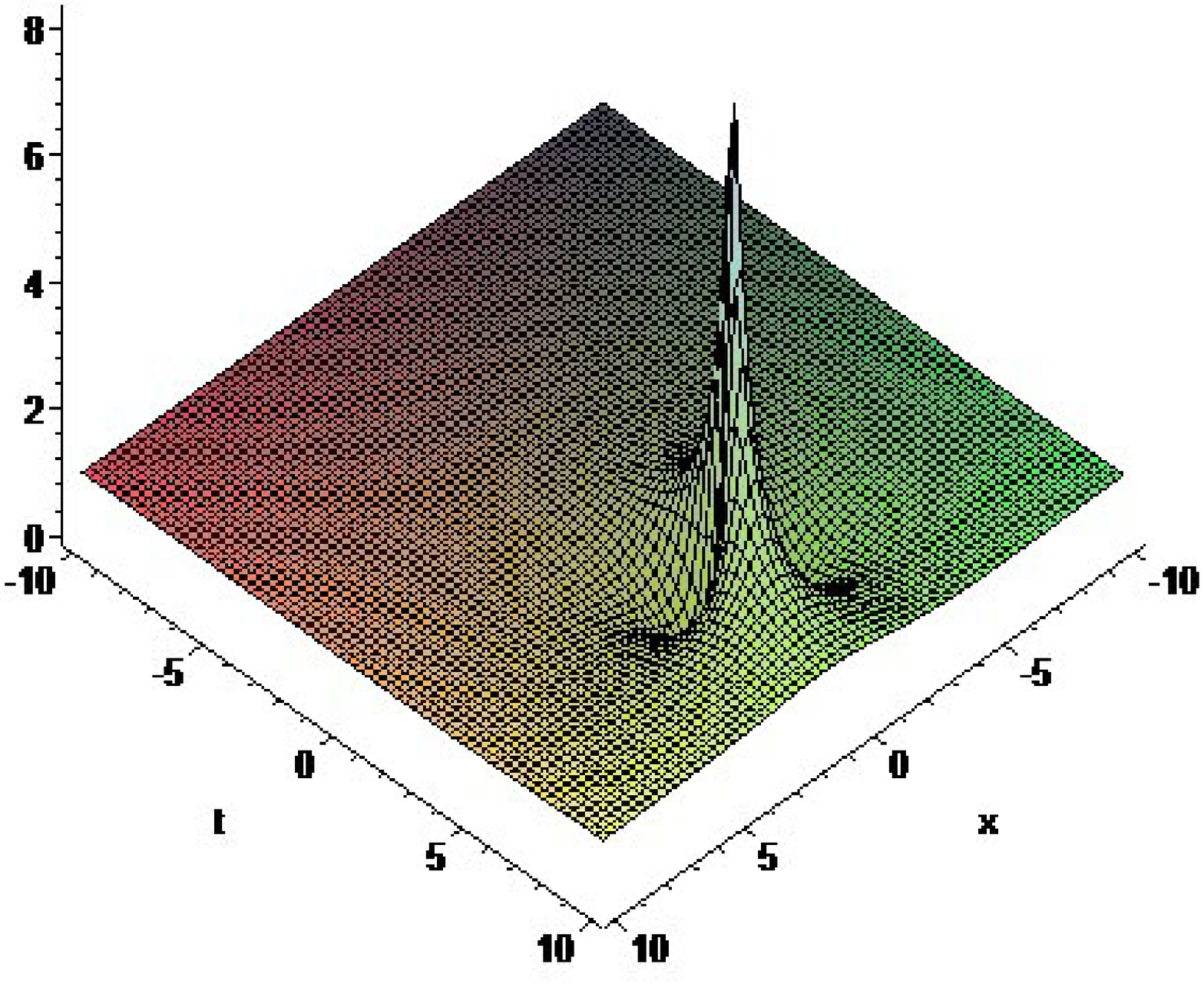}}
\caption{(Color online) The fundamental pattern of first order rogue wave. (a) The first order rogue wave with $S_0=5$, the maximum
 amplitude occurs at $t=-5$ and $x=0$. (b) The first order rogue wave with $S_0=-5$, the maximum amplitude occurs at
 $t=5$ and $x=0$.}\label{fig.1rw1}
\end{figure*}



\begin{figure}[!t]
\centering
\subfigure[]{\includegraphics[height=4cm,width=4cm]{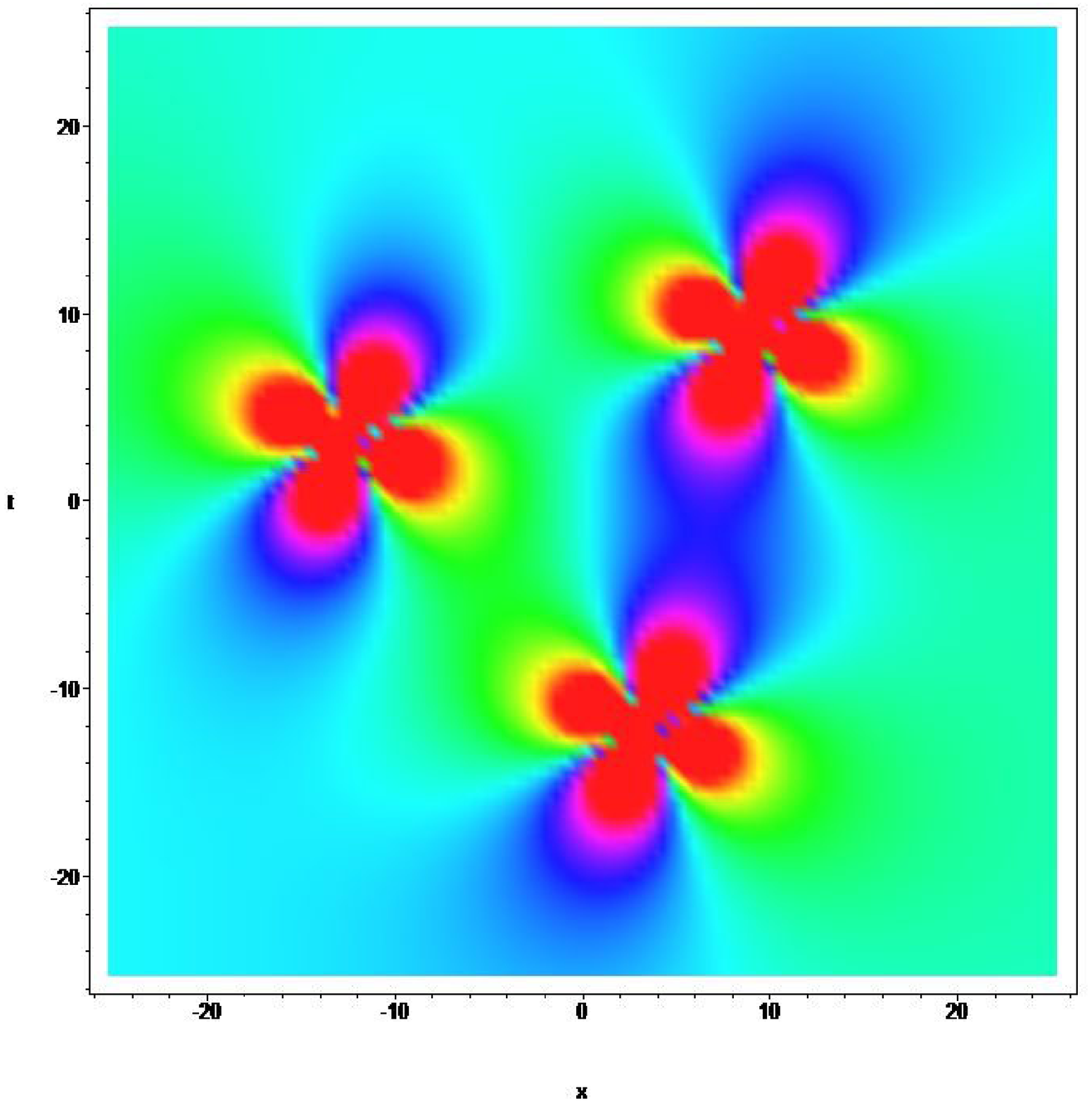}}
\qquad
\subfigure[]{\includegraphics[height=4cm,width=4cm]{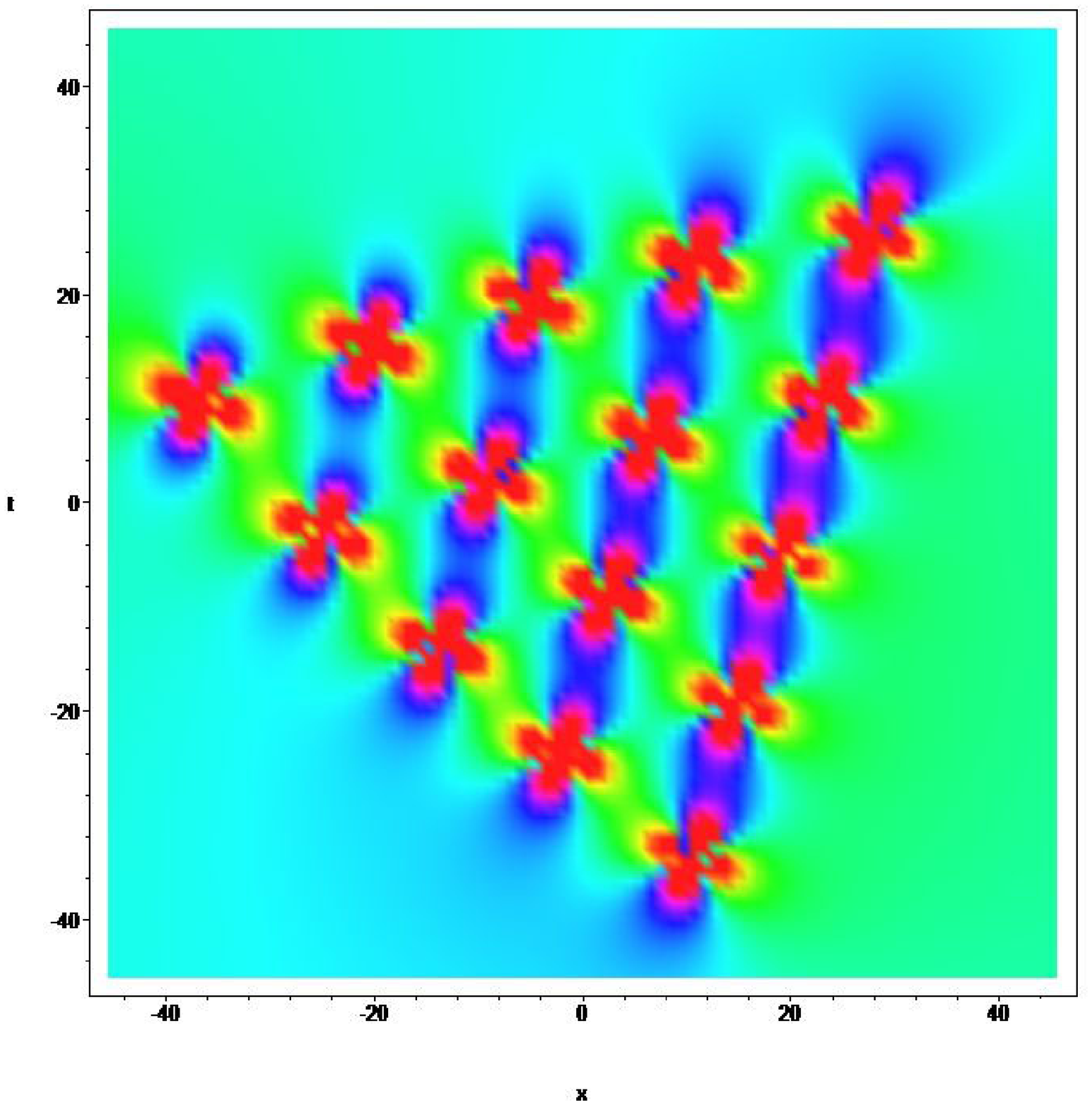}}
\qquad
\subfigure[]{\includegraphics[height=4cm,width=4cm]{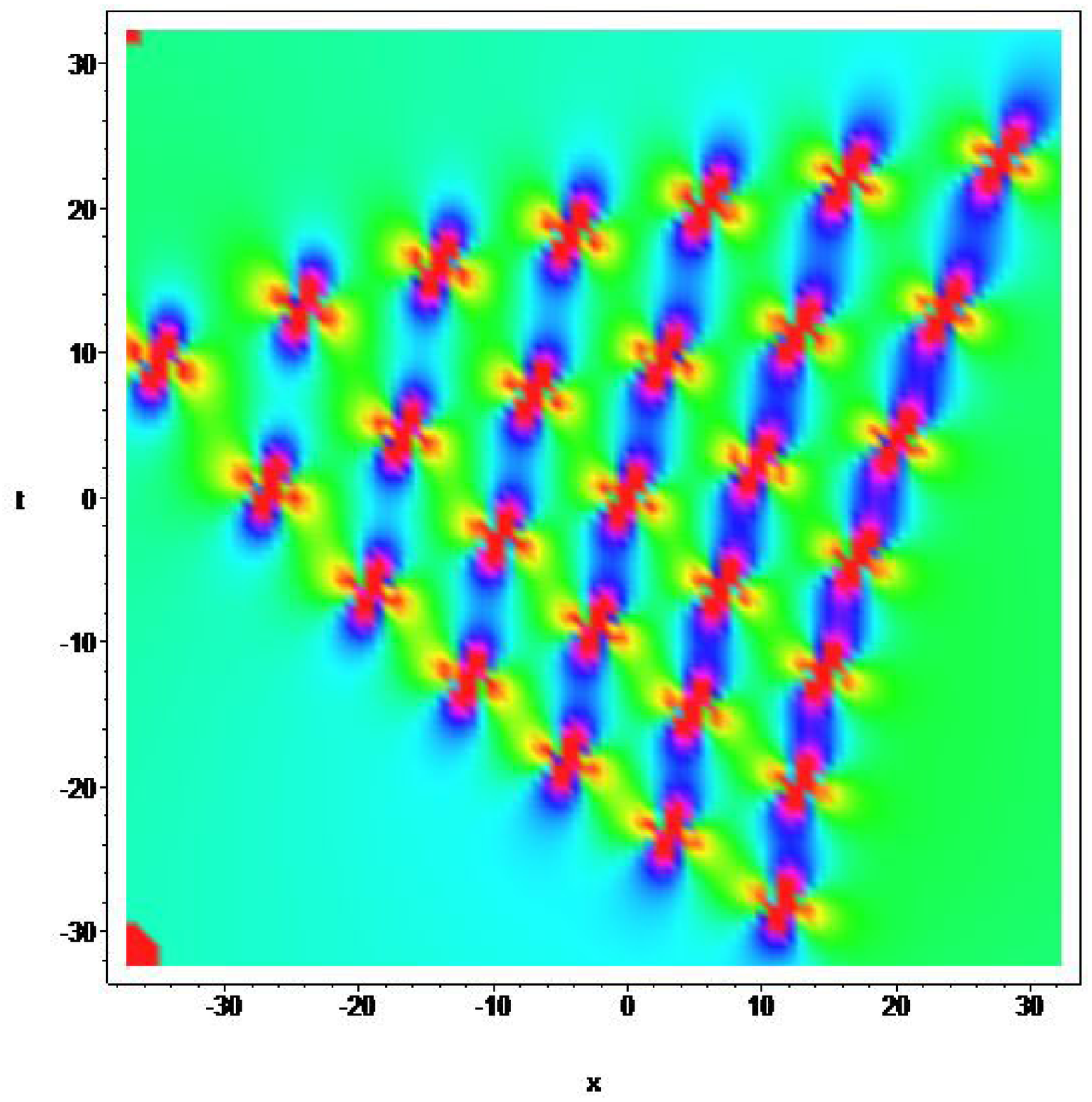}}
\caption{(Color online) The triangle structure of higher order rogue wave. (a) The second order rogue wave with $S_1=1000$. (b) The fifth order rogue wave with $S_1=800$.  (c) The seventh order rogue wave with $S_1=250$.}\label{fig.rwtri}
\end{figure}

\begin{figure}[!t]
\centering
\subfigure[]{\includegraphics[height=4cm,width=4cm]{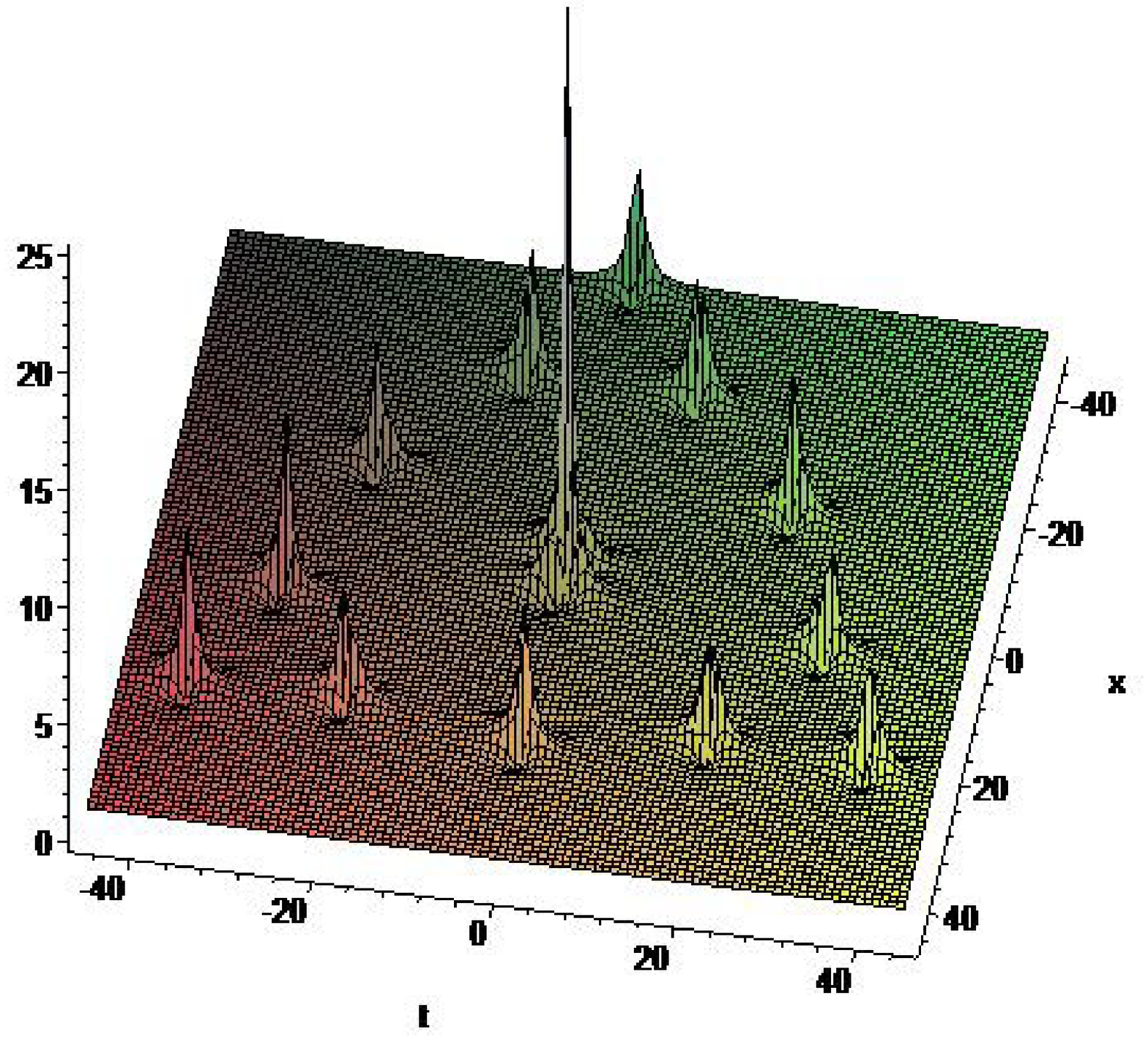}}
\qquad
\subfigure[]{\includegraphics[height=4cm,width=4cm]{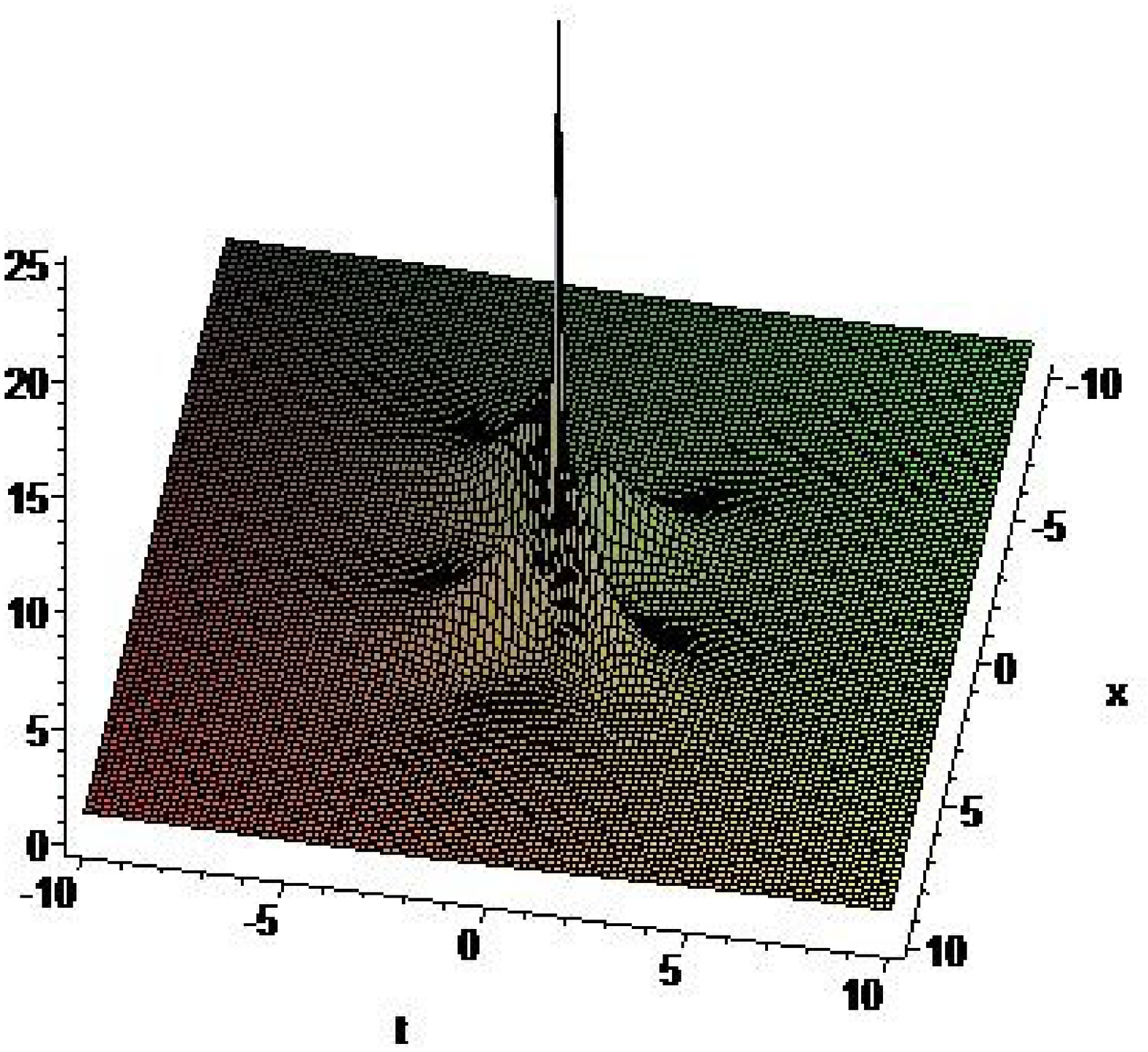}}
\caption{(Color online) The modified triangular structure with a higher order rogue wave locating in the center. (a) An overall profile with $S_1=1000$. (b)A local central profile of the right panel.}\label{fig.rwtri1}
\end{figure}


\begin{figure*}[!htbp]
\centering
\subfigure[]{\includegraphics[height=4cm,width=4cm]{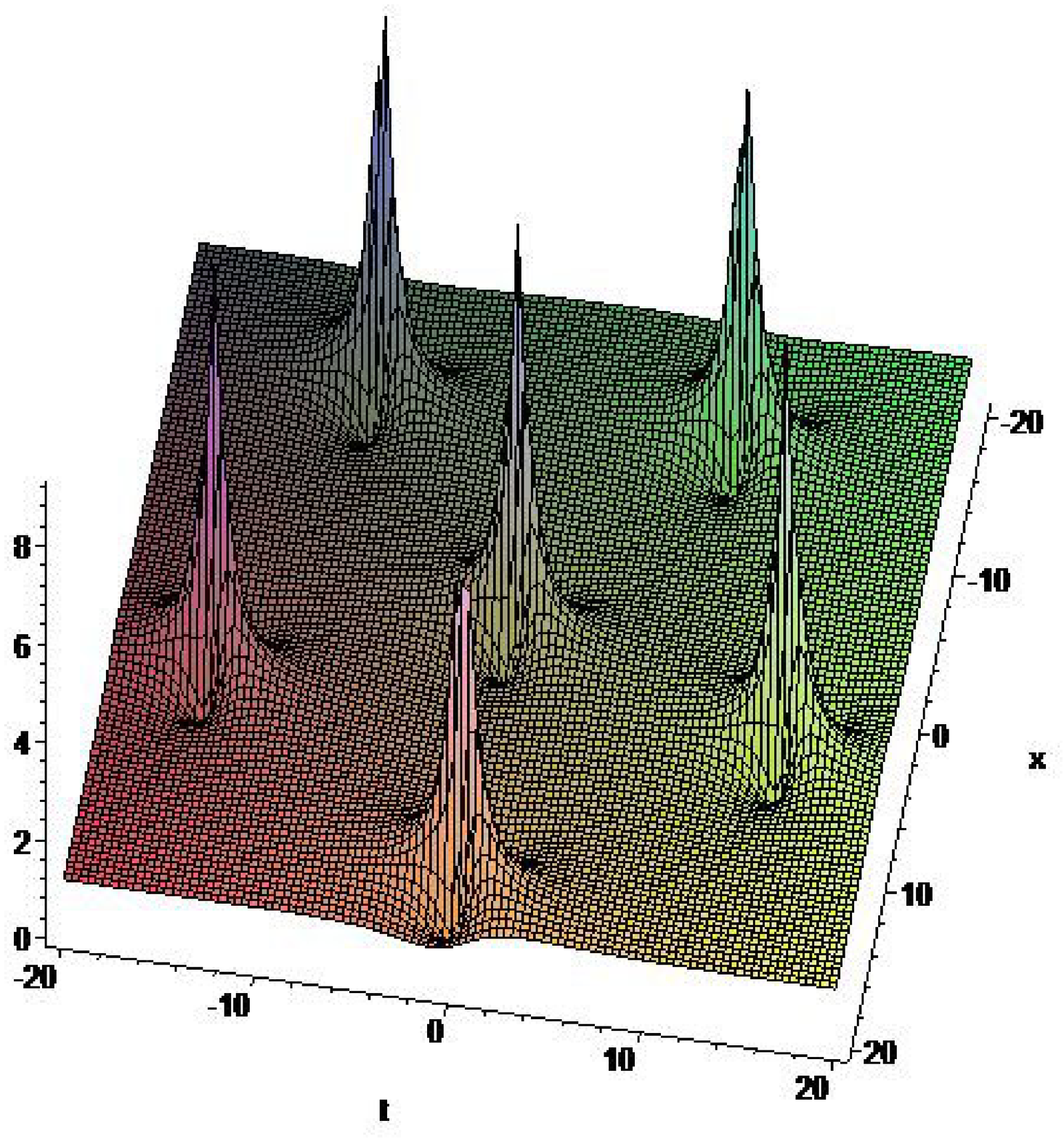}}
\qquad
\qquad
\subfigure[]{\includegraphics[height=4cm,width=4cm]{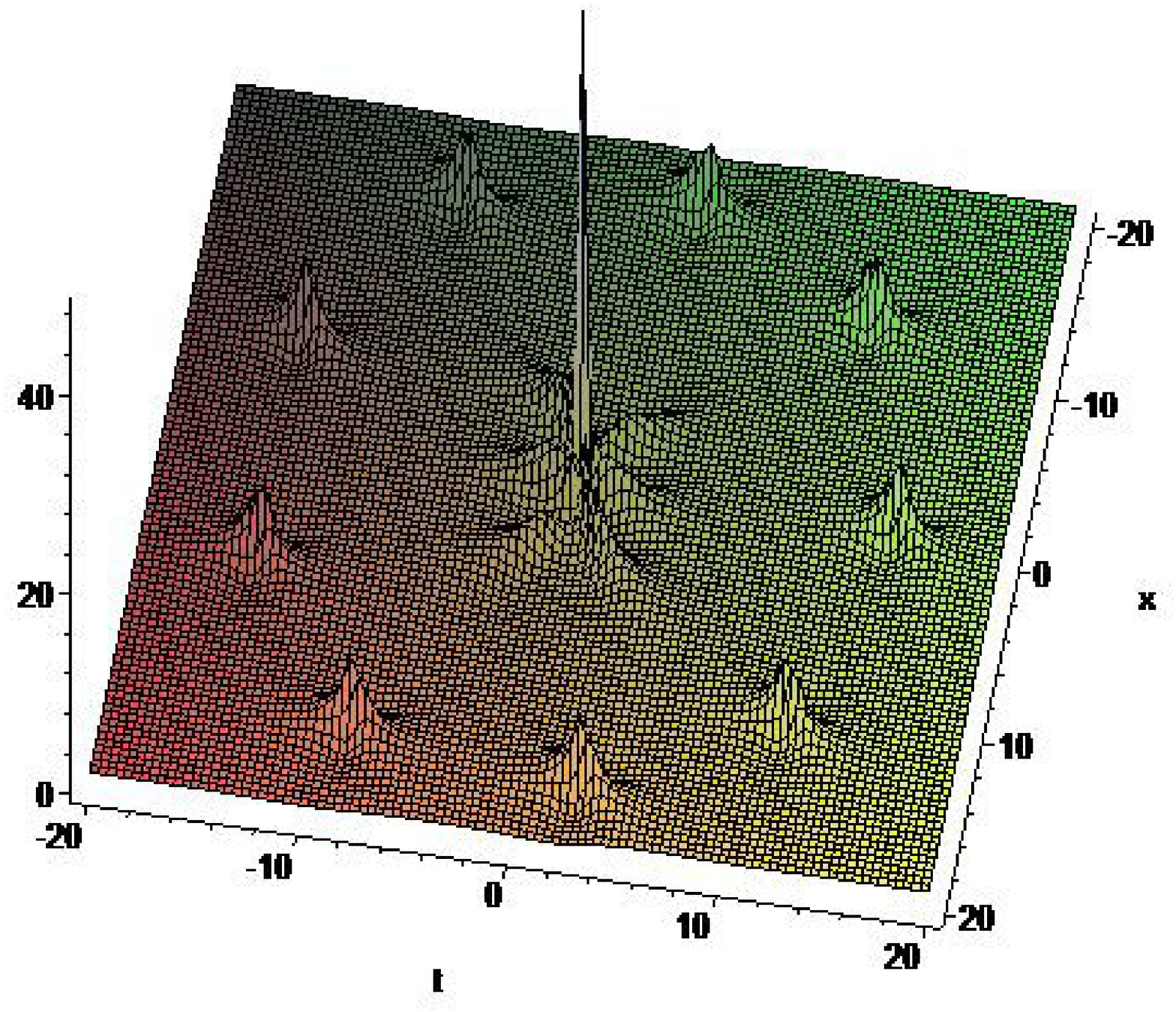}}
\qquad
\qquad
\subfigure[]{\includegraphics[height=4cm,width=4cm]{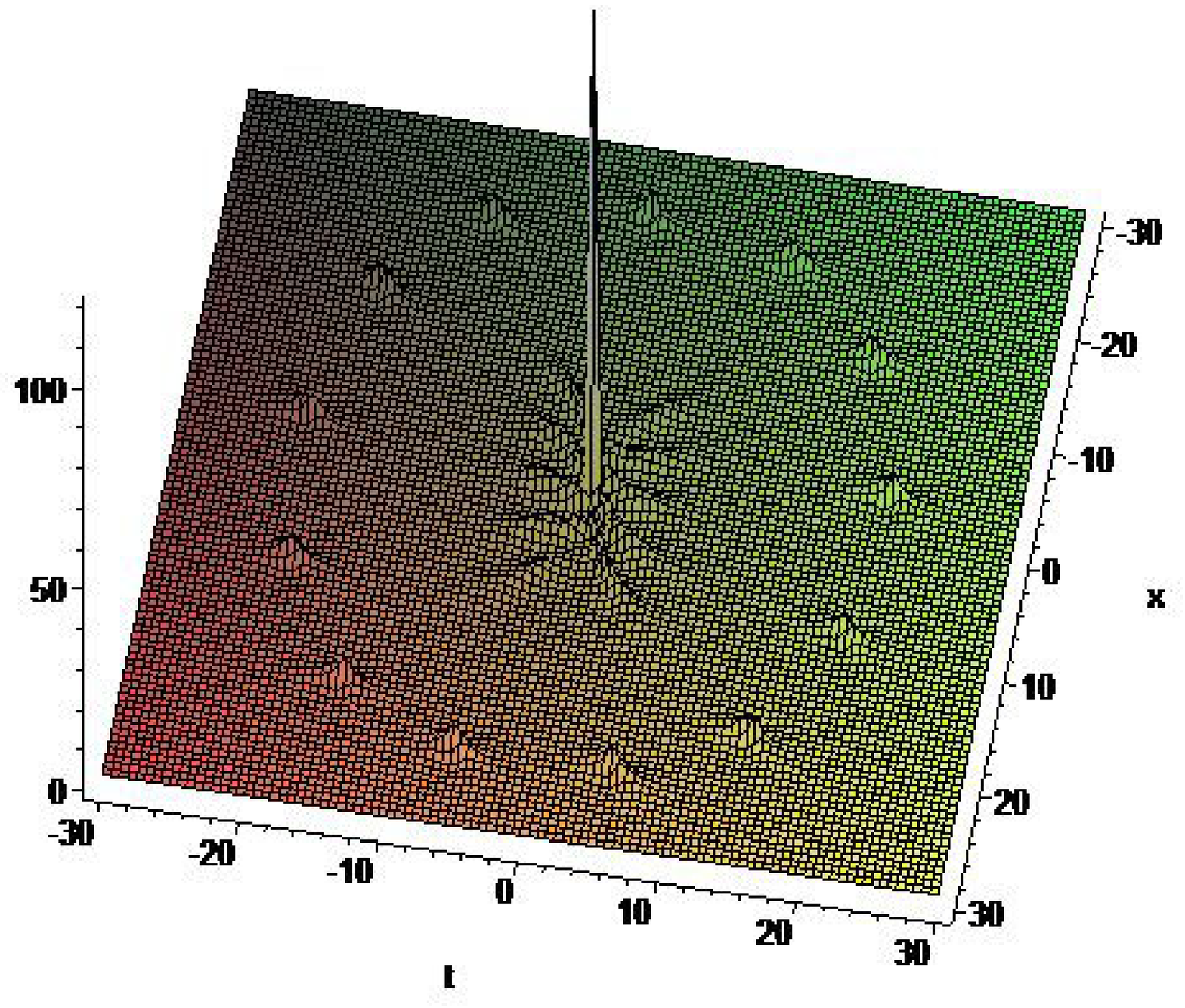}}
\caption{(Color online) The ring structure of higher order rogue wave. (a) The third order rogue wave with $S_2=50000$.  (b) The fifth order rogue wave with $S_4=5000000$.  (c) The seventh order rogue wave with $S_6=5\times10^7$.}\label{fig.ring}
\end{figure*}

\begin{figure*}[!htbp]
\centering
\subfigure[]{\includegraphics[height=4cm,width=4cm]{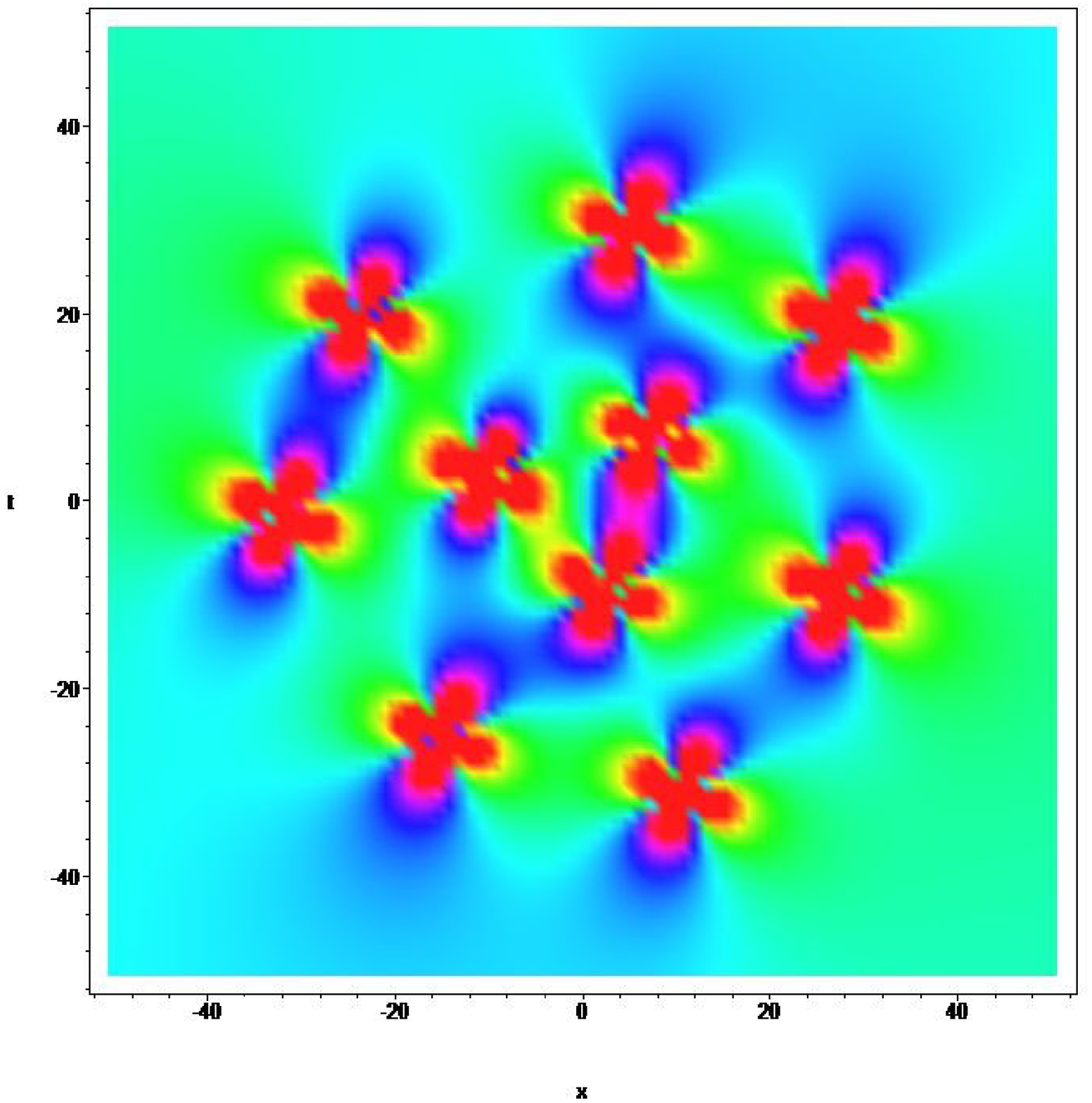}}
\qquad
\subfigure[]{\includegraphics[height=4cm,width=4cm]{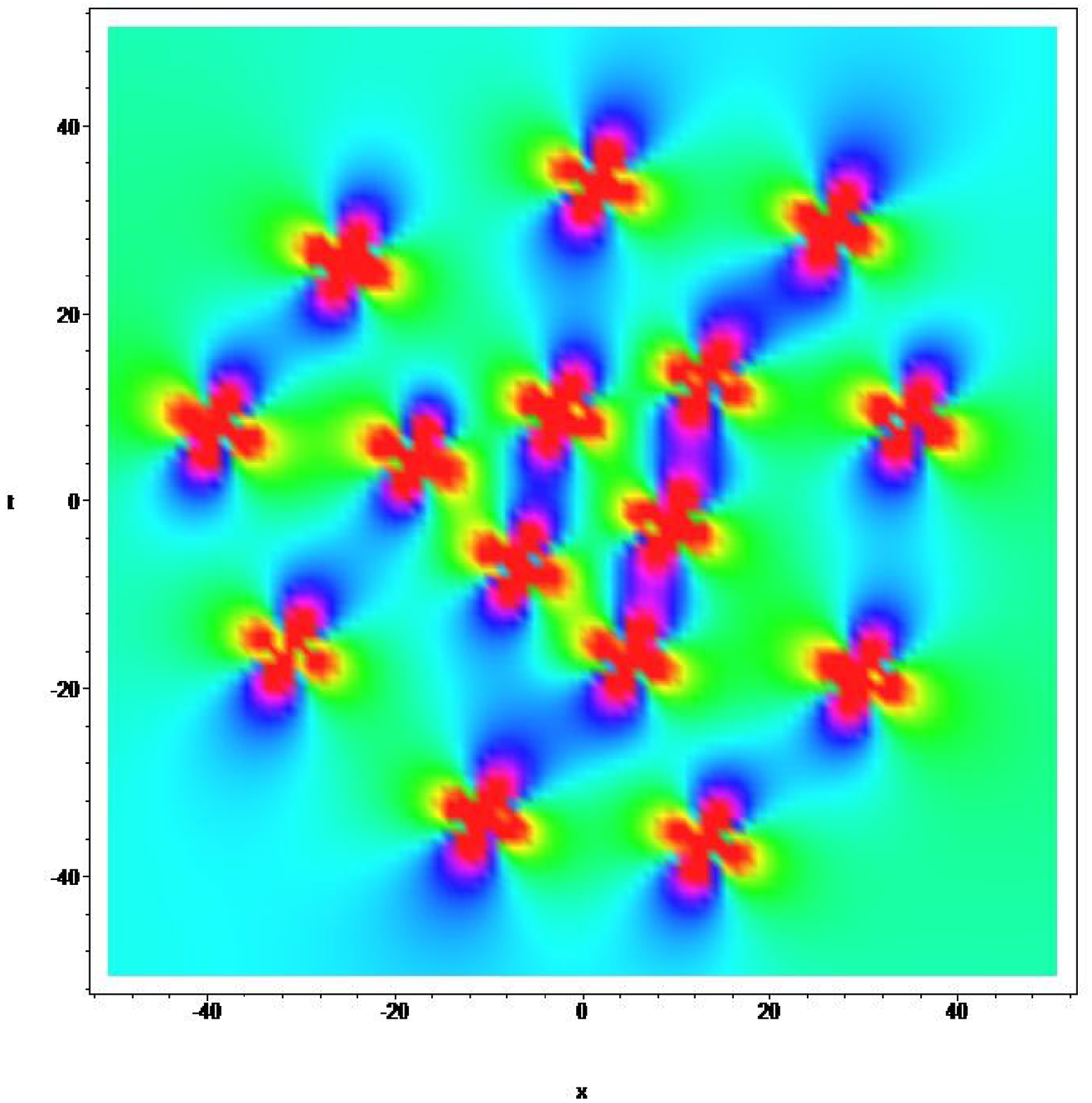}}
\\
\subfigure[]{\includegraphics[height=4cm,width=4cm]{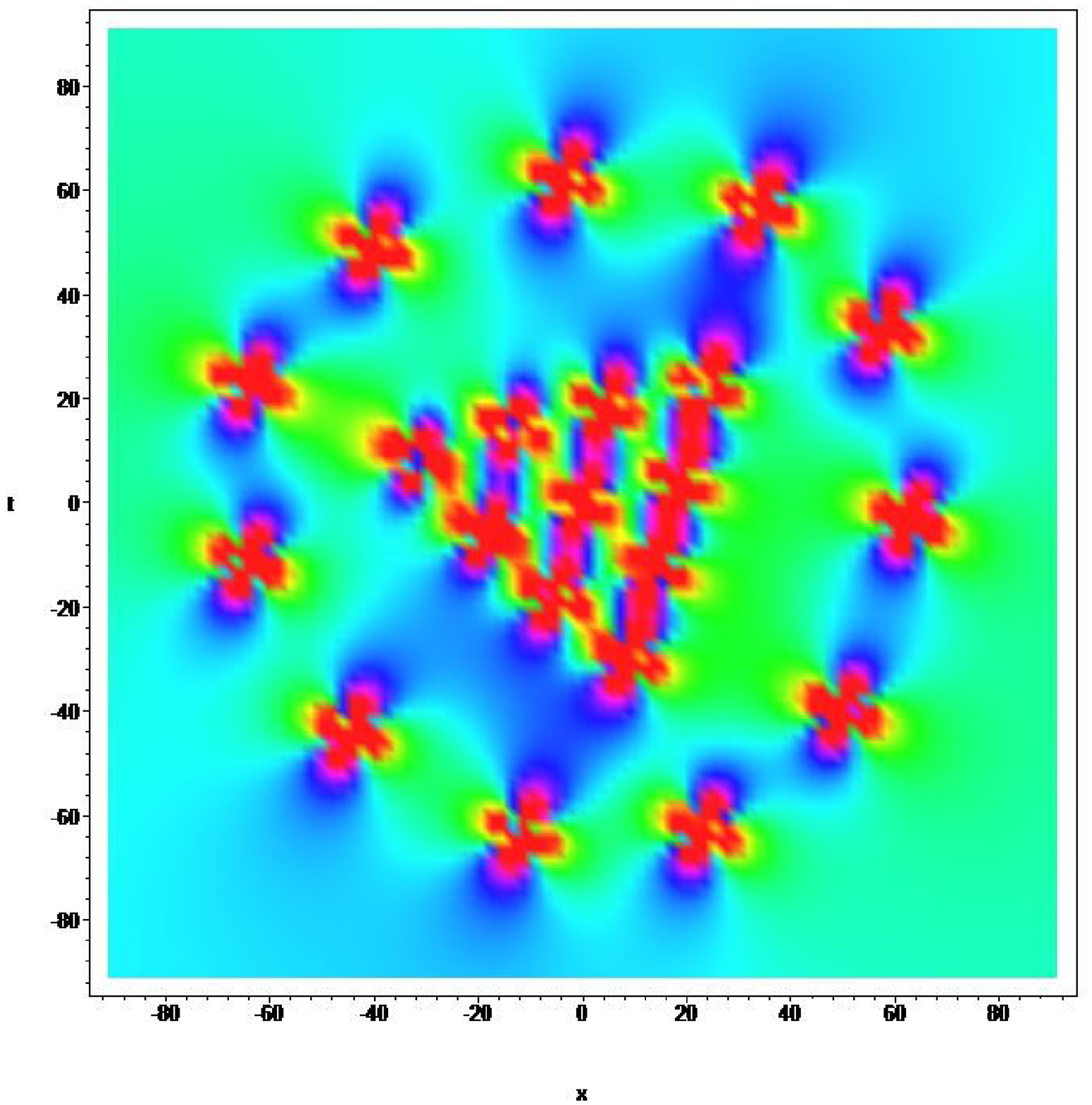}}
\qquad
\subfigure[]{\includegraphics[height=4cm,width=4cm]{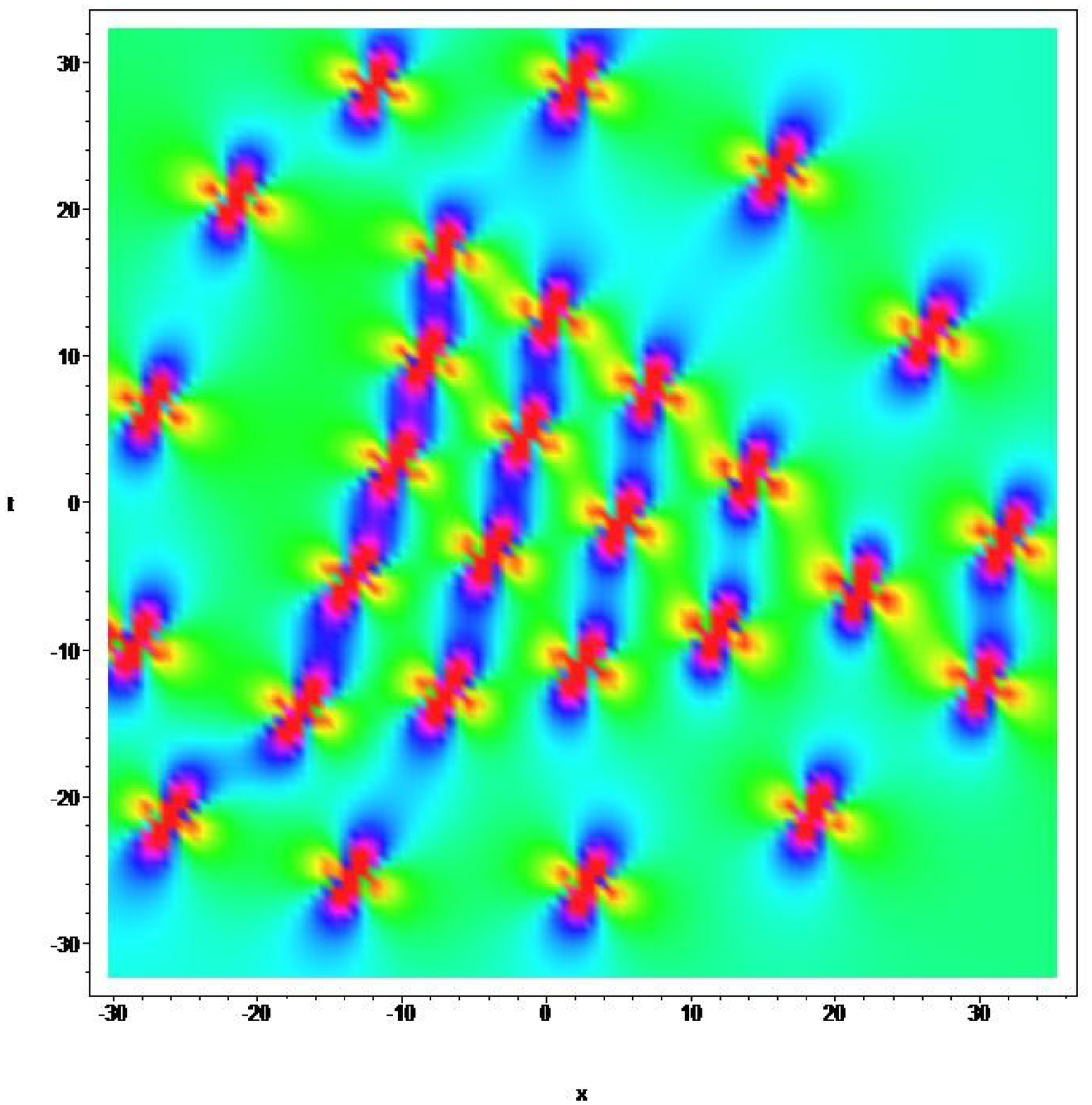}}
\caption{(Color online)The ring-triangle structure of higher order rogue wave. (a) The forth order rogue wave with $S_3=5\times10^7$, $S_1=500$. (b) The fifth order rogue wave with $S_4=5\times10^9$, $S_1=500$. (c)The sixth order rogue wave with $S_6=5\times10^{13}$, $S_1=1000$. (d) The seventh order rogue wave with $S_7=1\times10^{11}$, $S_1=150$.}\label{fig.ringtriangle}
\end{figure*}

\begin{figure*}[!htbp]
\centering
\subfigure{\includegraphics[height=4cm,width=4cm]{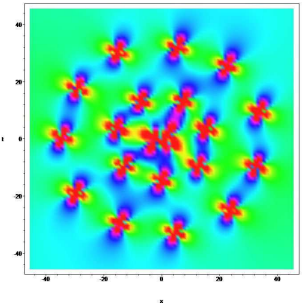}}
\caption{(Color online)The multi-ring model of the sixth order rogue wave with $S_5=8\times10^{9}$, $S_3=80000$, the inner peak is still a higher order wave}\label{fig.multiring1}
\end{figure*}

\begin{figure*}[!htbp]
\centering
\subfigure[The entire structure]{\includegraphics[height=4cm,width=4cm]{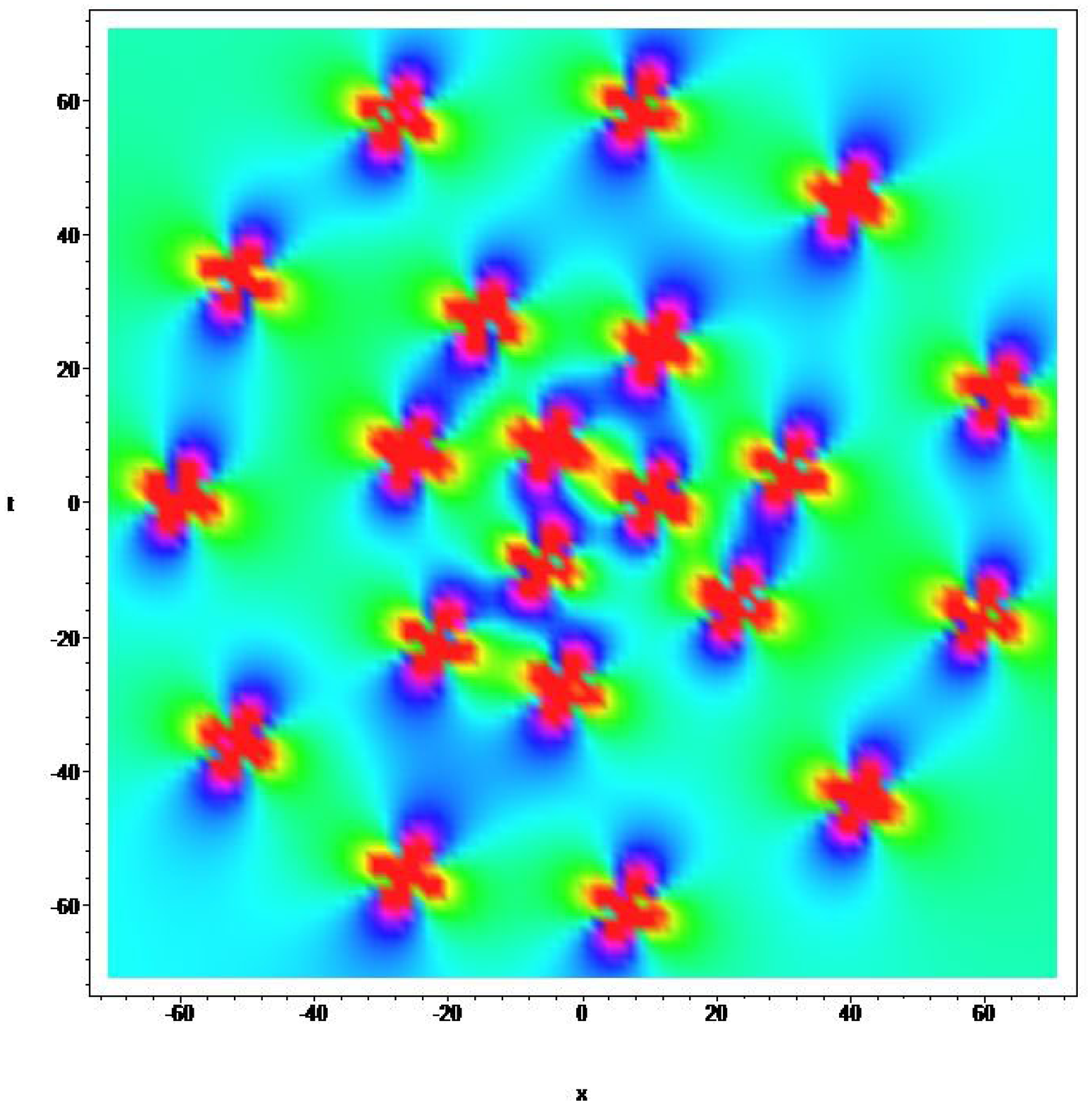}}
\qquad
\subfigure[The inner structure]{\includegraphics[height=4cm,width=4cm]{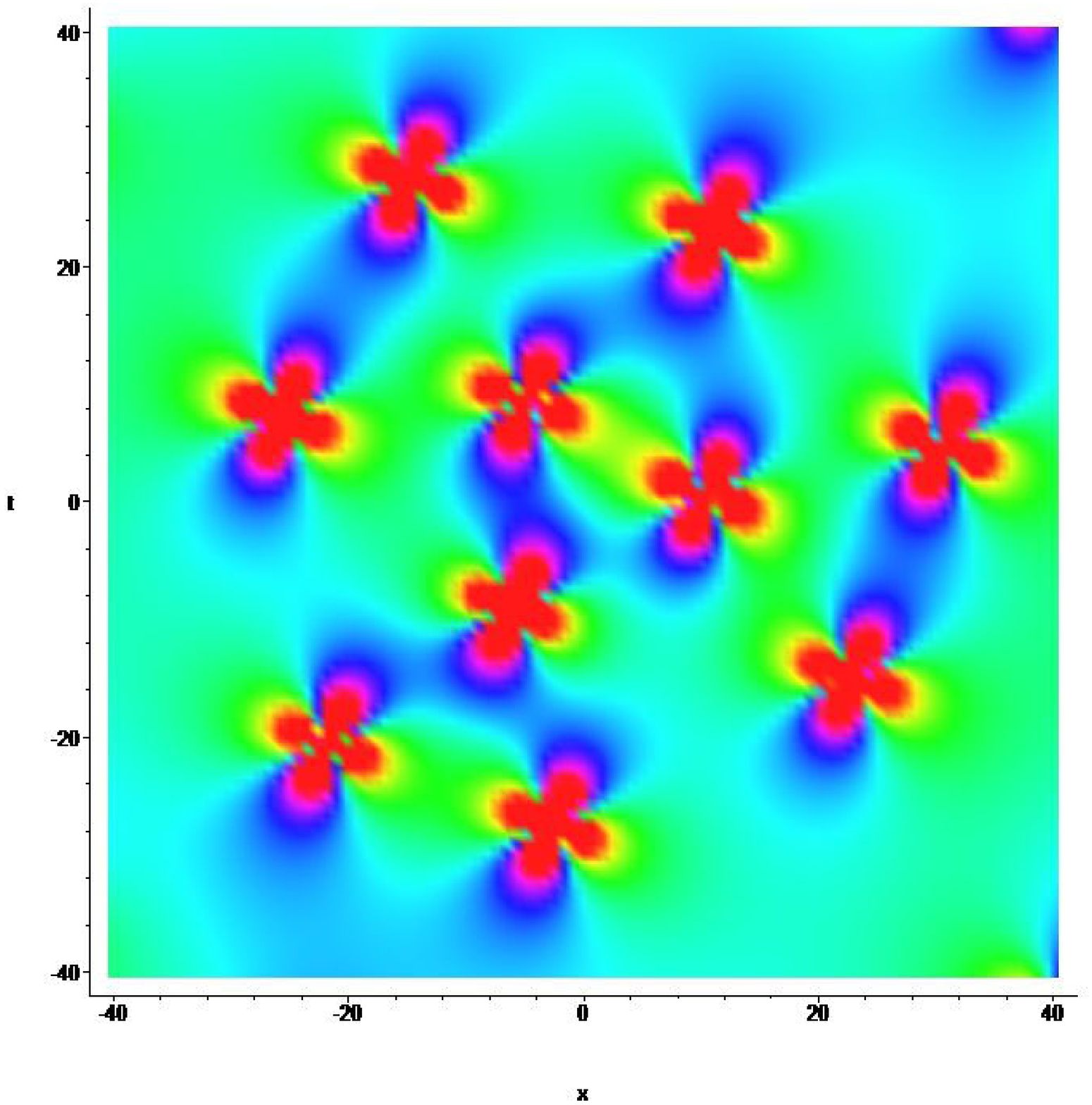}}
\caption{(Color online)The multi-ring model of the sixth order rogue wave, the inner higher rogue wave is split into a triangle.}\label{fig.multiring2}
\end{figure*}

\begin{figure*}[!htbp]
\centering
\subfigure[The entire structure]{\includegraphics[height=4cm,width=4cm]{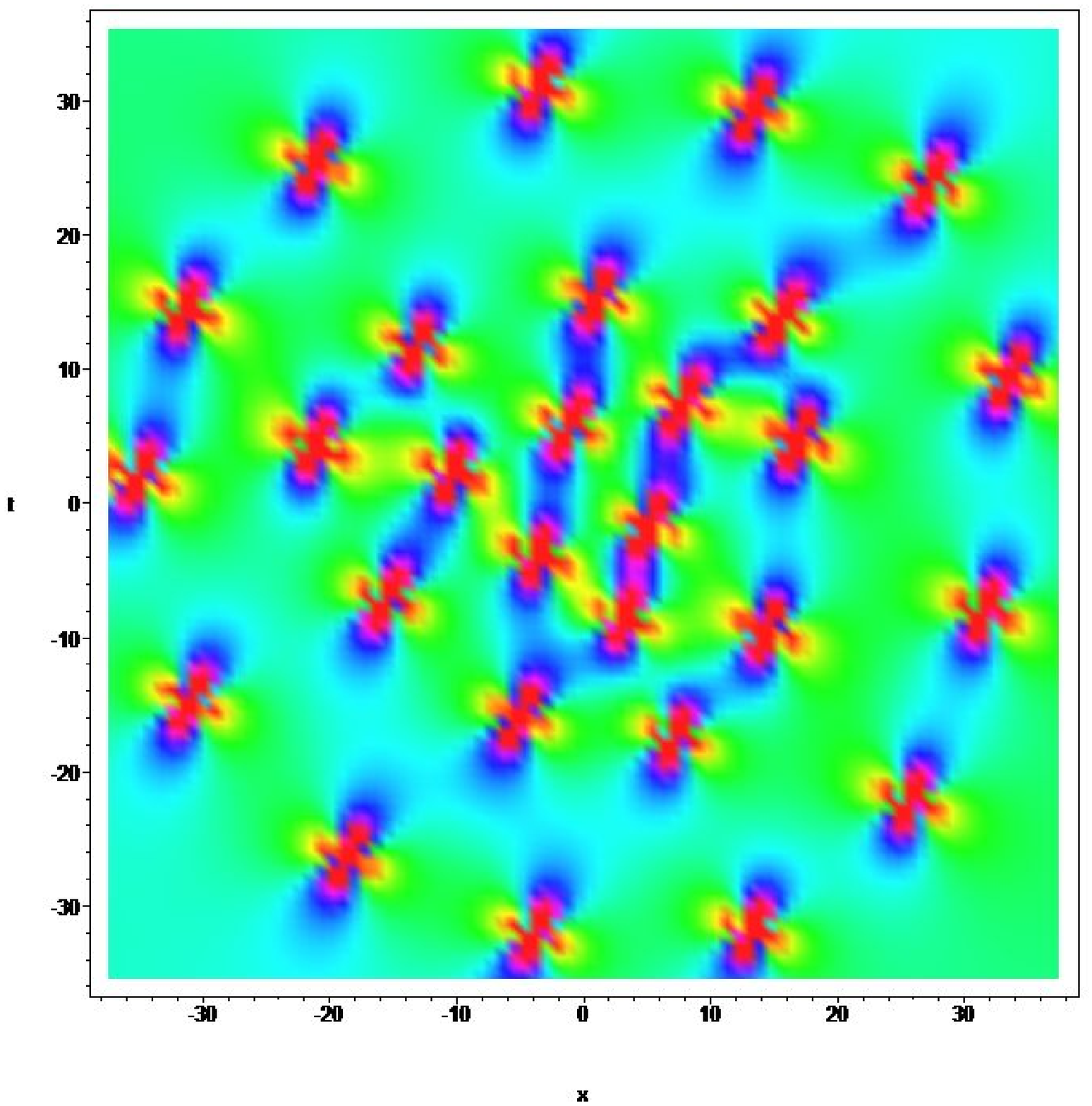}}
\qquad
\subfigure[The inner structure]{\includegraphics[height=4cm,width=4cm]{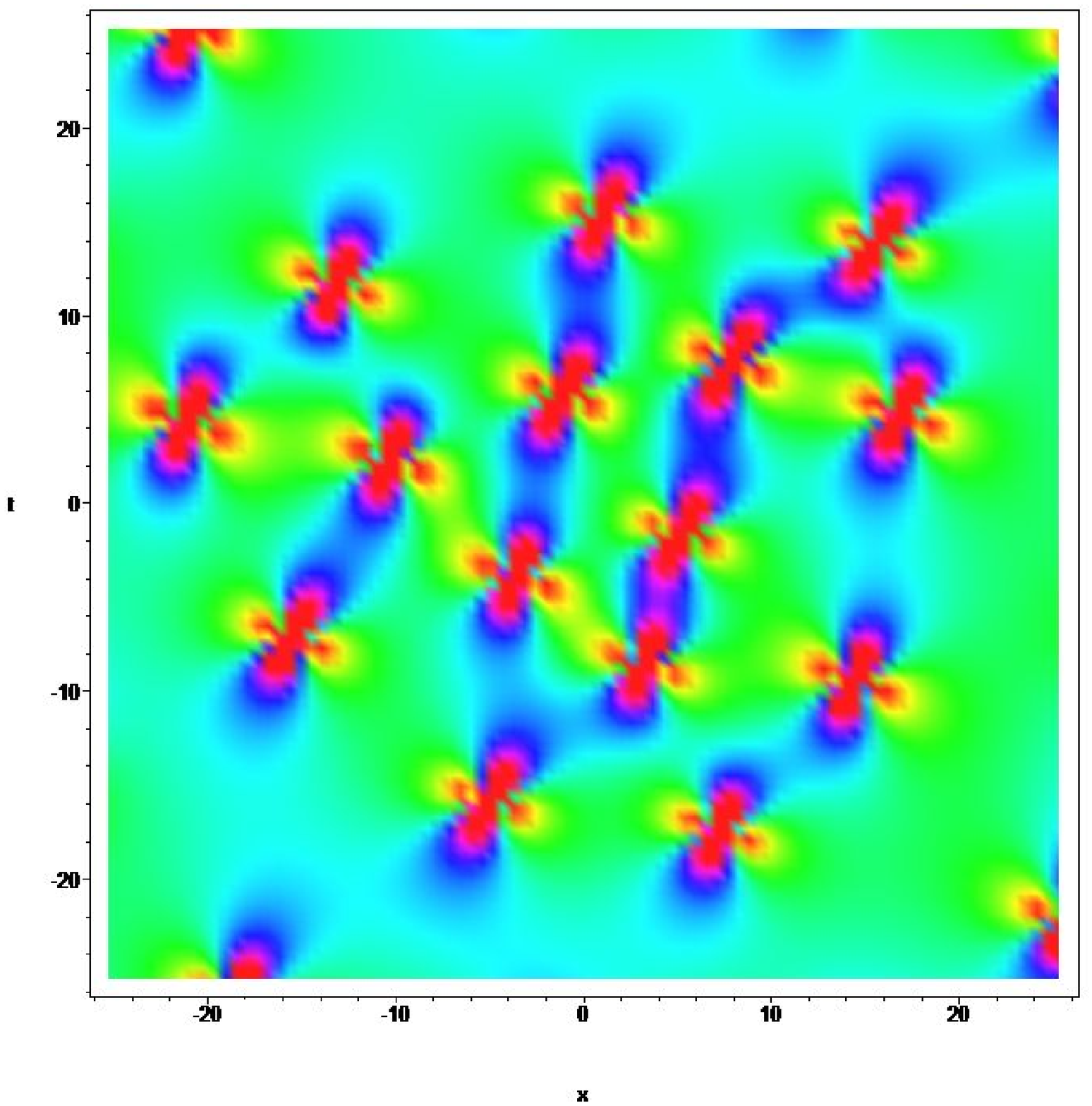}}
\\
\subfigure[The entire structure]{\includegraphics[height=4cm,width=4cm]{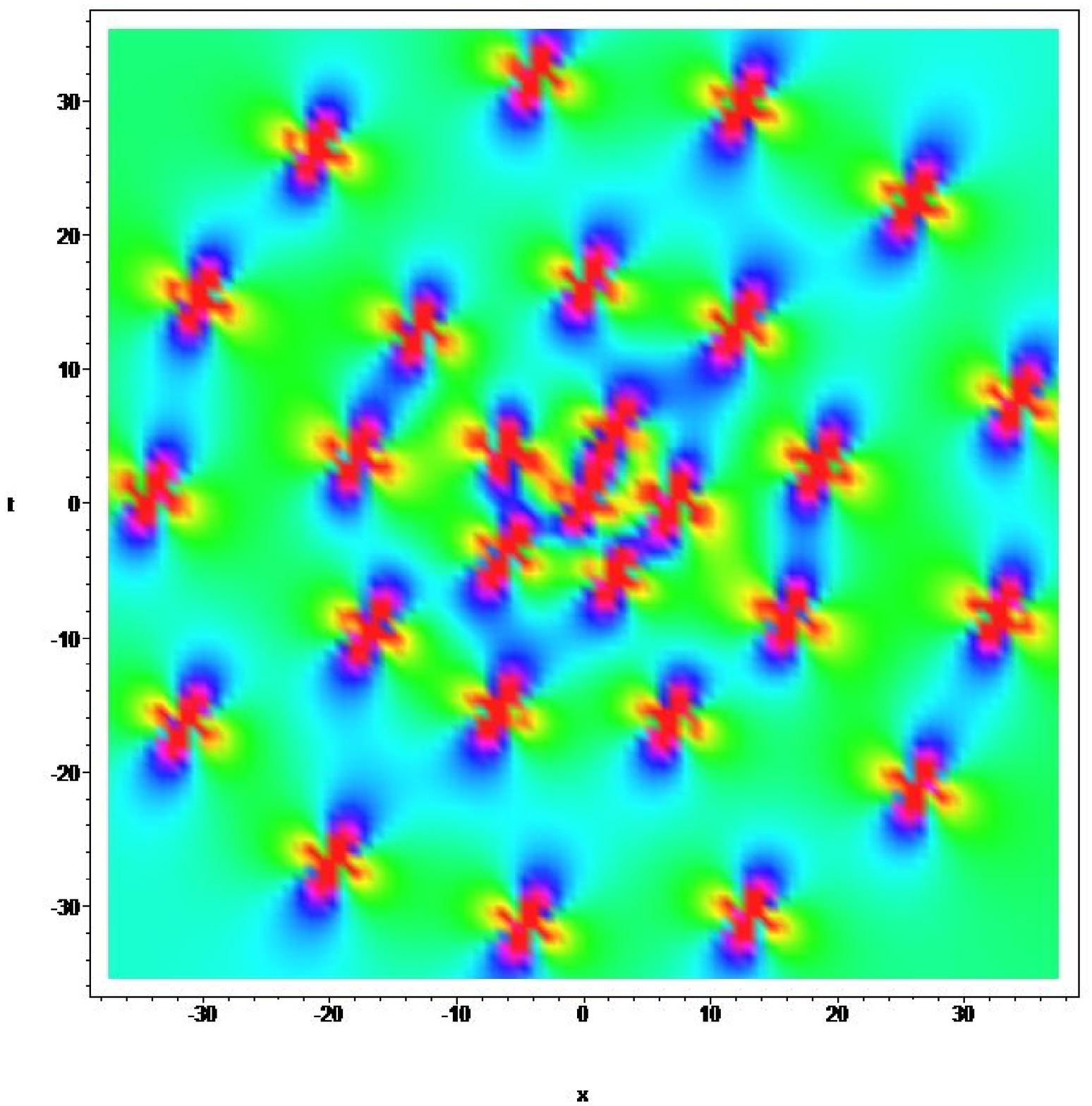}}
\qquad
\subfigure[The inner structure]{\includegraphics[height=4cm,width=4cm]{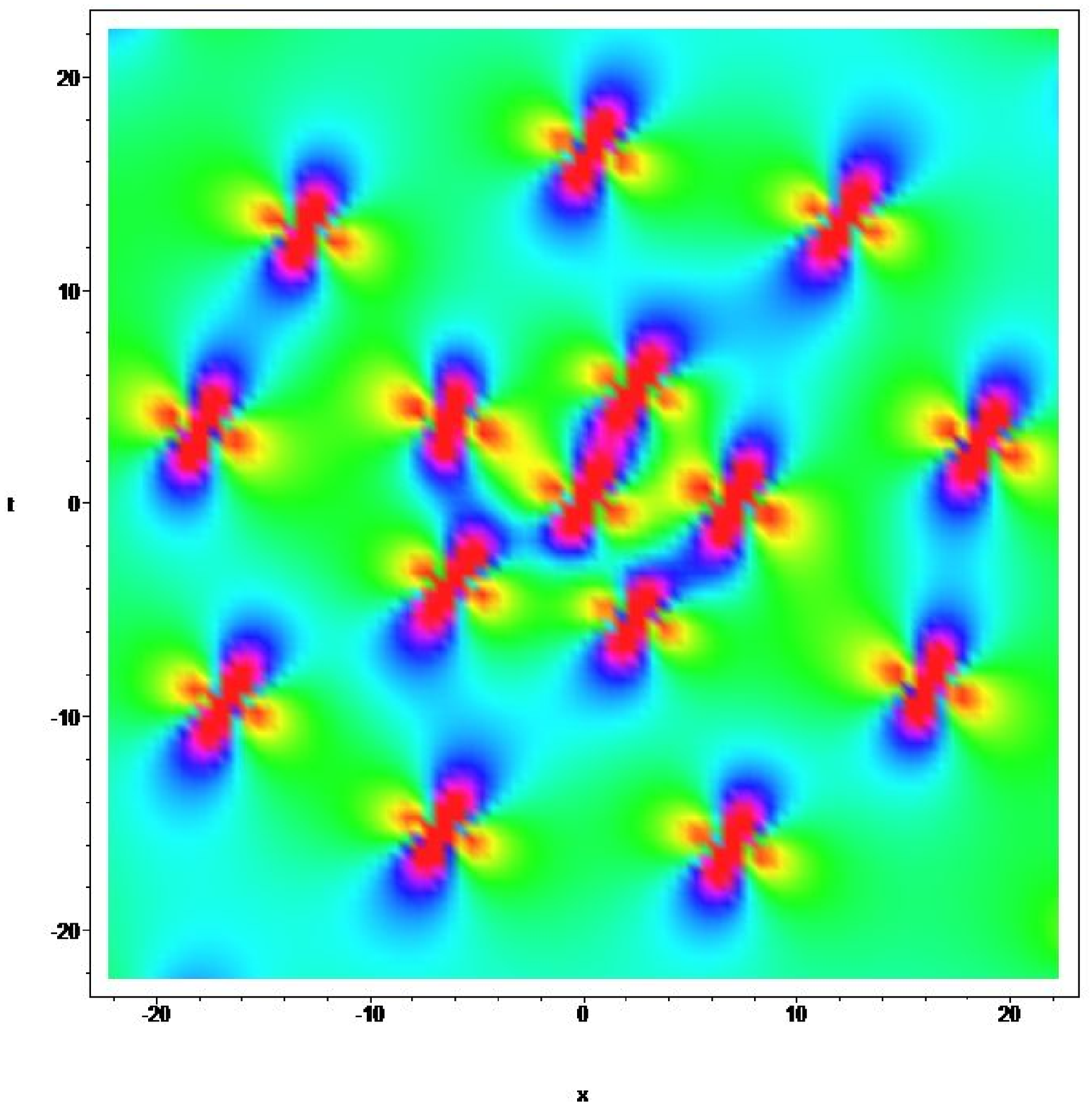}}
\caption{(Color online)The two kinds of multi-ring structure of $7$-order rogue wave solution with three parameters. (a) The $7$-order rogue wave with $S_6=5\times10^{11}$, $S_4=1\times10^7$, $S_1=100$. (b)The $7$-th order rogue wave with $S_6=5\times10^{11}$, $S_4=1\times10^7$, $S_2=1000$.}\label{fig.multiring3}
\end{figure*}

\end{document}